\newtheorem{lemma}{Lemma}
\begin{document}

\title{Trading in Complex Networks}

\author{Felipe M. Cardoso}
\email{fmacielcardoso@gmail.com}
\affiliation{Institute for Biocomputation and Physics of Complex Systems,
	Universidad de Zaragoza, Spain}
\affiliation{Unidad Mixta Interdisciplinar de Comportamiento y Complejidad Social (UMICCS), UC3M-UV-UZ, Spain}
\author{Carlos Gracia-L\'azaro}
\affiliation{Institute for Biocomputation and Physics of Complex Systems,
Universidad de Zaragoza, Spain}
\affiliation{Unidad Mixta Interdisciplinar de Comportamiento y Complejidad Social (UMICCS), UC3M-UV-UZ, Spain}
\affiliation{Department of Theoretical Physics, Faculty of Sciences, Universidad de Zaragoza, Spain}
\author{Frederic Moisan}
\affiliation{Faculty of Economics, Cambridge University}
\author{Sanjeev Goyal}
\affiliation{Faculty of Economics and Christ's College, Cambridge University}
\author{\'Angel S\'anchez}
\affiliation{Department of Theoretical Physics, Faculty of Sciences, Universidad de Zaragoza, Spain}
\affiliation{Grupo Interdisciplinar de Sistemas Complejos, Departamento de Matemáticas, Universidad Carlos III de Madrid, 28911 Leganés, Madrid, Spain}
\affiliation{Institute UC3M-BS for Financial Big Data (IBiDat), Universidad Carlos III de Madrid, 28903 Getafe, Madrid, Spain}
\author{Yamir Moreno}
\email{yamir.moreno@gmail.com}
\affiliation{Institute for Biocomputation and Physics of Complex Systems,
Universidad de Zaragoza, Spain}
\affiliation{Unidad Mixta Interdisciplinar de Comportamiento y Complejidad Social (UMICCS), UC3M-UV-UZ, Spain}
\affiliation{Department of Theoretical Physics, Faculty of Sciences,
Universidad de Zaragoza, Spain}
\affiliation{ISI Foundation, Turin, Italy}

\begin{abstract}
Global supply networks in agriculture, manufacturing and services are a defining feature of the modern world. The efficiency and the distribution of surpluses across different parts of these networks depend on choices of intermediaries. This paper conducts price formation experiments with human subjects located in large complex networks to develop a better understanding of the principles governing behavior. Our first finding is that prices are larger and that trade is significantly less efficient in small world networks as compared to random networks. Our second finding is that location within a network is not an important determinant of pricing. An examination of the price dynamics suggests that traders on cheapest -- and hence active -- paths raise prices while those off these paths lower them. We construct an agent based model (ABM) that embodies this rule of thumb. Simulations of this ABM yield macroscopic patterns consistent with the experimental findings. Finally, we extrapolate the ABM on to significantly larger random and small world networks and find that network topology remains a key determinant of pricing and efficiency.
\end{abstract}

\maketitle

Globalization is a prominent feature of the modern economy \cite{rodrik2011globalization}. Nowadays, supply, service and trading chains \cite{anderson2004trade,antras2013organizing,hummels2001nature,antras2013organizing,
antras2011intermediated,chaabane2012design} play a central role in different contexts such as agriculture
\cite{fafchamps1999relationships,bayley2000revolution,meerman1997reforming,traub2008effects}, transport
and communication networks \citep{d2006antitrust,d2009asymmetry}, international trade \cite{dicken2001chains} and finance \cite{li2013dealer,gale2009trading}. One key question on these systems is how pricing dynamics by intermediaries of the economy impacts both efficiency and surpluses. The purpose of this paper is to develop a better understanding of the forces that shape intermediary pricing behavior in such complex networks.

Game theory constitutes a useful framework to study competition among trading agents \cite{osborne1990bargaining}. In this context, the Nash Bargaining Game \cite{nash1950bargaining} studies how two agents share a surplus that they
can jointly generate. In the Nash Bargaining Game, two players demand a portion of some
good. If the total amount requested by both players is less than the total value of the good, both players get
their request; otherwise, no player gets their request. There are many Nash equilibria in this game: any combination of demands whose sum is equal to the total value of the good constitutes a Nash equilibrium. There is also a Nash equilibrium where each player demands the entire value of the good \cite{nash1953two}. As a generalization of bargaining games to \textit{n} players, Choi \textit{et al.} \cite{Choi2017} proposed and tested in the laboratory a model of intermediation pricing.
In this model, a good is supposed to go from a source S to a destination D. Intermediaries,
which are located in the nodes of a network, may post a price for the passage of the good. Trading occurs if there exists a path between S and D on which the sum of prices is smaller than or equal to the value of the good.
The key finding was that the pricing and the surpluses of the intermediaries depends on the presence of
`critical' nodes: a node is said to be critical if it lies on all possible paths between S and D. They showed that intermediaries extract most of the surplus if and only if there exist critical nodes in the network. However, in all cases, intermediaries set prices so as to ensure that trade did take place. Efficiency was close to one hundred percent. These experiments were done in small groups
and relatively simple networks of up to 10 subjects. The goal of the present paper is to study trading in large scale networks  \cite{boccaletti2006complex}, to gain a better understanding into the role of network topology in commerce.

We conduct experiments with human subjects embedded in complex networks: specifically, we consider a random network and a small world network each with 26 subjects (and the same level of average connectivity). In these networks there are \textit{no} critical nodes: the results of Choi et al. would suggest that intermediary prices must be close to zero and that their surpluses must also be close to zero. As we will show below, our \textit{first} finding clearly rejects this conjecture: in all the networks studied, intermediaries set positive prices and they make large profits. Moreover, network topology has powerful effects: in particular, in the random network,  intermediaries set lower prices as compared to a small world network. As a consequence, there is full trading efficiency in the random network, but trade breaks down in almost one third of the cases in the small world network.

This striking difference leads us to an examination of how location within a network affects pricing: our \textit{second} finding is that within a given network, standard measures of network centrality appear to have no significant effect on pricing behavior. As network location does not matter for prices, the presence on the cheapest and active path must be crucial for profits. And indeed, this is what we observe: intermediaries' earnings are positively related to their betweenness weighted by the path length.

Turning to the dynamics of price setting, we observe that traders raise prices if they lie on the successful trade path (\textit{i.e.}, the least-cost path), and that they lower prices when they are off the least-cost path. Based on these observations, we build an agent based (ABM) model that reproduces qualitatively the experimental results. We then use simulations to extrapolate our findings to larger networks: our \textit{third} finding is that network topology continues to matter and that random networks exhibit lower prices and higher level of efficiency even when there are 100 traders. Finally, our \textit{forth} finding uncovers the role of node-disjoint paths $-$ two paths are disjoint if and only if they do not share any node$-$and of the average path length in shaping level of pricing: networks that have a larger number of node-disjoint paths exhibit lower prices and higher efficiency. Among networks with the same number of node-disjoint paths, average path length is an important driver of costs.

\section*{Experimental setup}

We consider a simple game of price setting in networks to study supply, service and trading chains taken from
Choi \textit{et al} \cite{Choi2017}. Let $\mathcal{N}$ be a set of nodes $\mathcal{N}= \lbrace S, D, 1, 2, \ldots, n\rbrace$, where $S$ is a source and $D$ a destination; and $\mathcal{L}$ a set of pairs of elements of $\mathcal{N}$.
$\mathcal{N}$ and $\mathcal{L}$ define a trading network where the elements of $\mathcal{L}$ are the links.
A path between $S$ and $D$ is a sequence of distinct nodes $\lbrace i_1, i_2, ..., i_l \rbrace$ such that
$\lbrace (S,i_1),(i_1,i_2), (i_2,i_3), \ldots, (i_{l-1},i_l),(i_l,D) \rbrace \subset  \mathcal{L}$.

Each experiment consists of 4 series of 15 rounds each, and it involves $n$ human subjects that will
play the role of intermediaries. Before starting the first round of a series, each
subject is randomly assigned to a node in $\lbrace 1, 2, \ldots, n\rbrace$. The positions
of $S$ and $D$ are also assigned at random. These positions (players, $S$ and $D$) remain constant over the 15 rounds.
Subjects are always informed about the network and their position in it,
that is, they can see the whole network including $S$ and $D$. At each round, every subject has to make a decision; namely, she has to post a price from 0 to 100 tokens for the passage of a good by her node.
The prices determine a total cost for every path between $S$ and $D$. A path is feasible if its cost is not greater than a given threshold (100 tokens) that represents the value of the economic good generated by the path. After all players have made their choices, the cheapest path is selected: if it is feasible, each player located in this path receives her proposed price as a payoff. Otherwise, no trade takes place and payoffs are zero. Players who are out of the selected path do not get any payoff in that round. In the case of more than one cheapest path, the tie is resolved through a random choice among cheapest paths. From the second round onward, players are informed about the existence of a trade in the previous round, about the previously selected path, and about the prices and payoffs of all the players in the previous round together with their positions in the network.

We have conducted two experimental sessions in a random network of 26 nodes with $\langle k \rangle =3$ and two more sessions in a small-world-like network of 26 nodes with $\langle k \rangle =3$. Additionally, we have conducted another experimental session in a random network of 50 nodes with $\langle k \rangle =4$ that will allow us to check the robustness of the results against the size and connectivity of the network. These results can be found in SI sections 1-2 and SI Table S2. Further details can be found in Materials and Methods section.

\section*{Results and discussion}

The networks used in the experiment allow for coexisting paths with a different number of intermediaries, where theory predicts both efficient and inefficient (Nash) equilibria. Furthermore, these networks present different characteristics, such as degree and centrality distributions, that may affect the bargaining power of the intermediaries. These facts motivate our first question: How does the network topology affect costs and prices? Fig. \ref{fig:fig1}A shows the cheapest path cost in each of the networks considered. As shown, the small-world networks exhibit higher costs than random networks (t(232.41)=15.5, $p<0.001$). Fig. \ref{fig:fig1}B instead displays the costs of the cheapest path normalized by the number of nodes on it, \textit{i.e.}, the mean price of nodes along the cheapest path. The differences between networks persist, indicating that prices and costs strongly depend on the topology of the network. These results, separated by rounds, are shown in Fig. S4 of the SI. Table 1 shows that there is a very large effect of topology on efficiency: in the random network trade is realized in practically all the cases, while in the small-world network trade breaks down in almost one third of the cases (binomial-test, 0.95 CI=(0.76, 0.90), $p<0.001$). However, small-world networks involve higher costs and profits than random networks, since the higher posted prices compensate for the lower efficiency. Therefore, we conclude that the topology of the network matters for intermediation: the surpluses of the intermediaries vary significantly from one network class to the other.

\begin{figure}[tbhp]
\centering
\includegraphics[width=.6\columnwidth]{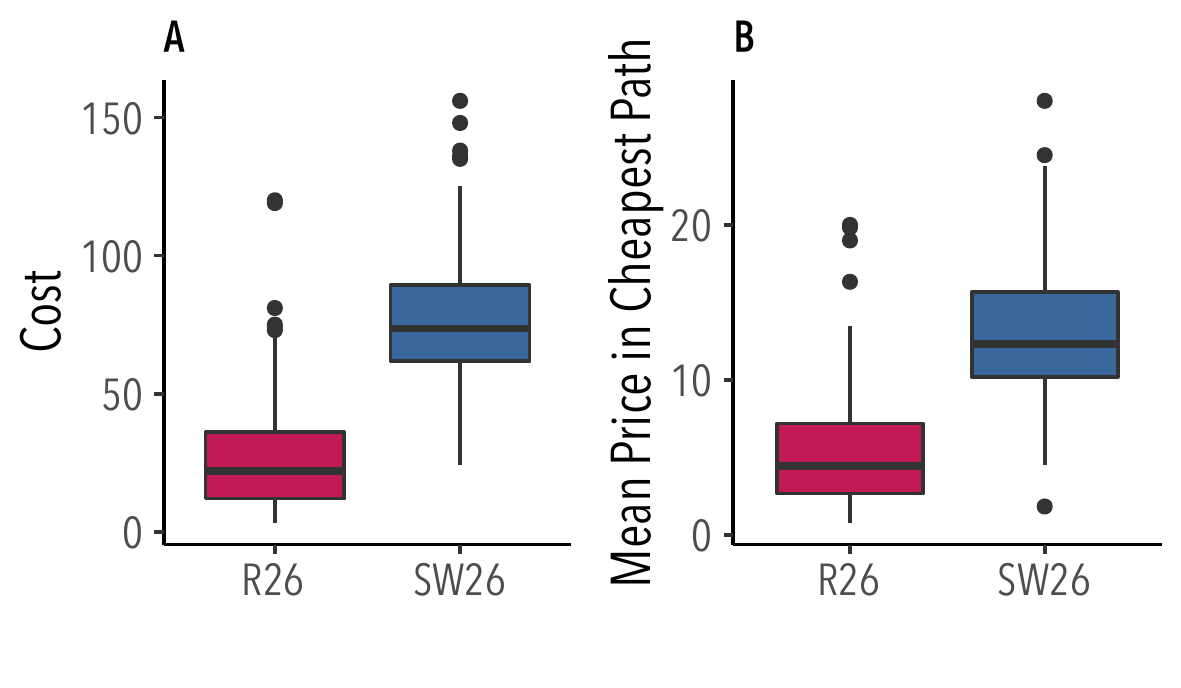}
\caption{
\textbf{Network topology affects trading costs and prices.}
\textbf{A}: Cost of the cheapest path for the random network of 26 nodes (R26), and for the small-world
network of 26 nodes (SW26).
 \textbf{B}: Mean price of participants in cheapest path
for the same networks.
Lines in the boxes denote the medians, whereas boxes extend to the lower- and upper-quartile values. Whiskers extend to the most extreme
values within 1.5 interquartile range (IQR).
}
\label{fig:fig1}
\end{figure}

\begin{table}[tbhp]
	\centering

	\begin{tabular}{lrrrrrr}
		\hline
		network & efficiency & price &  price in CP & cost & profit & length \\
		\toprule
		R 26 & 0.97 & 11.34 & 5.49 & 28.33 & 1.10 & 6.26 \\
		SW 26 & 0.68 & 18.10 & 13.16 &  76.52 & 2.38 & 7.00 \\
		\botrule
	\end{tabular}
	
	\caption{\textbf{Experimental results.}
	Efficiency (fraction of rounds
	in which the cheapest path cost was
	equal to or less than the threshold), and mean values of the price, price in the cheapest path, cost of the cheapest path,  profit, and cheapest path length
	for the random network with 26 nodes (R 26) and for the small-world network with 26 nodes
	(SW 26).}
	\label{table1}
\end{table}

Profit is only obtained when the subjects are on the cheapest path, i.e., when they are on the path through which the trading is realized. Thus, it is of interest to examine what is the role of the location of intermediaries in the network in shaping their behavior, which we do next. First, we observe that the networks in our experiment do not contain any critical nodes and yet they generate large rents. So the results from the small scale experiments by Choi \textit{et al} \cite{Choi2017} do not apply to complex larger scale networks. It seems likely then that nodes that are present on more paths have greater market power. This motivates a generalization of the notion of criticality as follows:

\begin{equation}\label{eq:sdc}
    sd(v)=\frac{|P_{SD}(v)|}{|P_{SD}|} \;\;,
\end{equation}
where $sd(v)$ is the \textit{partial} criticality of node $v$,
$|P_{SD}(v)|$ stands for the number of paths between the source and destination containing a given node $v$, and $|P_{SD}(v)|$ for the total number of paths between the source and the destination. Following this line of thought, a higher partial criticality may indicate a potential for greater bargaining power and therefore nodes with a higher partial criticality should show higher
prices and profits. Fig. \ref{fig:fig2}A shows the accumulated prices of the intermediaries
as a function of their partial criticality. There is no significant relation between partial criticality and the prices posted by participants. Even more strikingly, as  illustrated in Fig. \ref{fig:fig2}B, there is
no relationship between the accumulated payoff obtained and a node's partial criticality. Fig. \ref{fig:fig2}C shows the frequency that each player is on the cheapest path versus her partial criticality. Again, there is no relation between these variables.

\begin{figure*}
\centering
\includegraphics[width=.9\columnwidth]{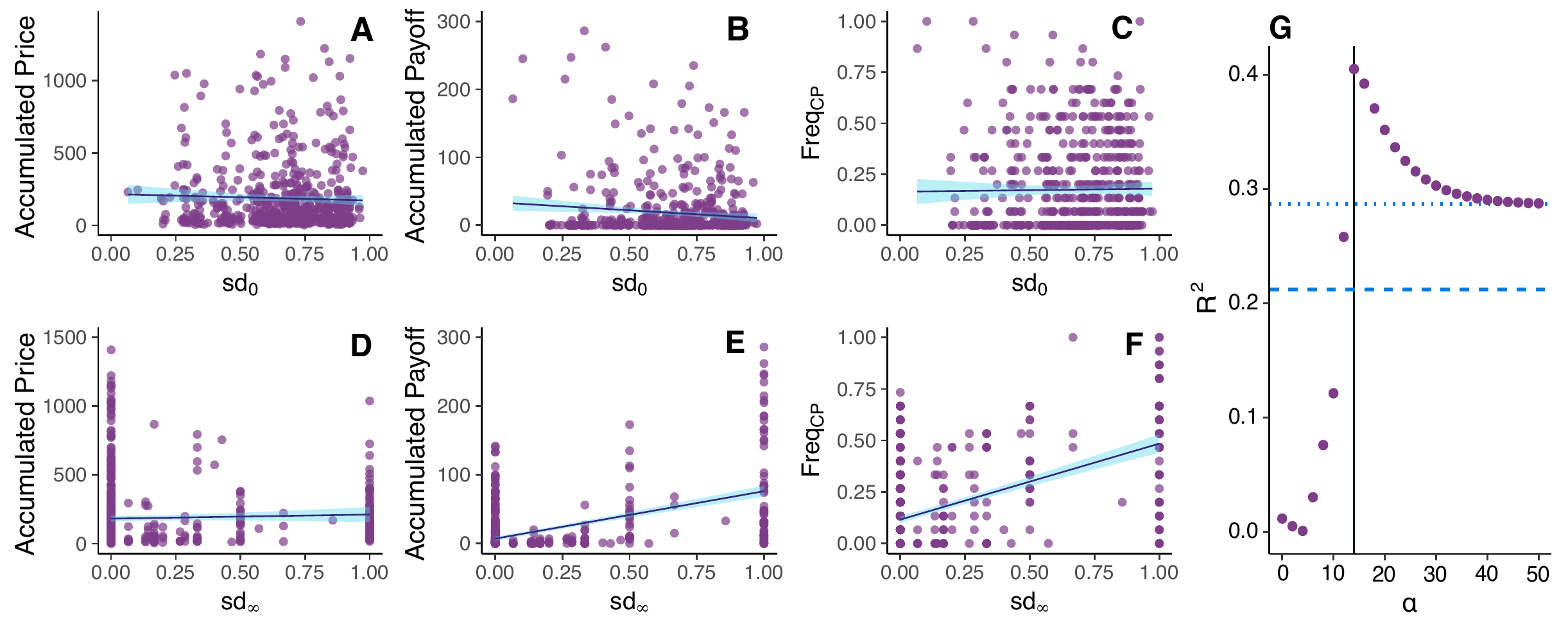}
\caption{
\textbf{SD-betweenness determines payoffs but not posted prices.}
 \textbf{A-F}: Accumulated price (\textbf{A,D}), accumulated payoff (\textbf{B,E}) and frequency in the cheapest path (\textbf{C,F}) of participants during a series of 15 rounds as a function of
the node criticallity $sd_0$ (\textbf{A,B,C}) and of
the SD-betweenness $sd_\infty$ (\textbf{D,E,F}).
\textbf{G:} $R^2$ of the regression of participants accumulated payoff
on $sd_\alpha$
versus $\alpha$, where
$\alpha$ modulates the weight of the length of the paths in the S-D centrality measure.
Dashed (points) line show the value of
$R^2$ for correlation of
payoffs on the betwenness (SD-betweenness). Data is pooled across any series of 15 rounds in any experimental session. For similar analyses within each experimental network, see SI Fig.S2.
}
\label{fig:fig2}
\end{figure*}

This lack of correlation may be due to the equal weighting of paths with different length. In
order to address this point, we refine our generalized notion of partial criticality to take path length into account:

\begin{equation}
     sd_\alpha(v)=\frac{\sum_{[S,v,D]} l(p)^{-\alpha}}{\sum_{[S,D]} l(p)^{-\alpha}} \; \; ,
     \label{eq:sdc.a}
\end{equation}

where the summations are over all the paths between $S$ and $D$ containing $v$ (numerator)
and over all the paths between $S$ and $D$ (denominator). $l(p)$ represents the length of
path $p$ and $\alpha$ stands for an arbitrary
weight: as $\alpha$ increases, more importance is given to shorter paths. Specifically, when $\alpha \to \infty$ it will consider only the shortest paths, $sd_\infty(v)$ being a measure
of the source-destination betweenness
of node $v$ (SD-betweenness$(v)$). On the opposite side, for $\alpha=0$ the
partial criticality of Equation \ref{fig:fig1}
is recovered, that is, $sd_0(v)=sd(v)$.

Fig. \ref{fig:fig2}D shows the accumulated prices of the intermediaries
as a function of their SD-betweenness. Again, there is no relation observed between pricing behavior
and betweenness. However, as shown in Fig. \ref{fig:fig2}E, there is a positive correlation between the accumulated payoff obtained by intermediaries $v$ and their $sd_\infty(v)$. That is, although pricing is uncorrelated with SD-betweenness centrality, profits are positively correlated with it. The reason behind this difference must therefore lie in how the presence of $v$ on the least-cost path is correlated with $sd_\infty(v)$. This is displayed in Fig. \ref{fig:fig2}F, which represents the fraction of times that
an intermediary is on the cheapest path versus her SD-betweenness. As shown, there is a positive correlation between these measures, which explains why -- in a situation where prices are largely insensitive to network location -- profits will be correlated with $sd_\infty(v)$. The robustness of these results against the size and connectivity of the network is discussed in the SI, Section 2.A.

So far, we have seen that node centrality does not influence earnings when we equally consider all the paths from S to D to compute it, but it does when we consider only the shortest paths. This fact indicates that the weight given to paths length is important to study the capacity of the nodes to extract surpluses. In order to verify this hypothesis, Fig. \ref{fig:fig2}G shows the coefficient of determination $R^2$ of the regression of intermediaries payoffs on $sd_\alpha$ as a function of $\alpha$. The best fit is obtained for $\alpha\sim 12$, which indicates that longer paths should have significantly smaller weight than shorter ones. As the number of paths grows exponentially with network size, SD-betweenness seems to be a feasible and good descriptor of participants' earnings.

\begin{figure}
\centering
\includegraphics[width=.6\columnwidth]{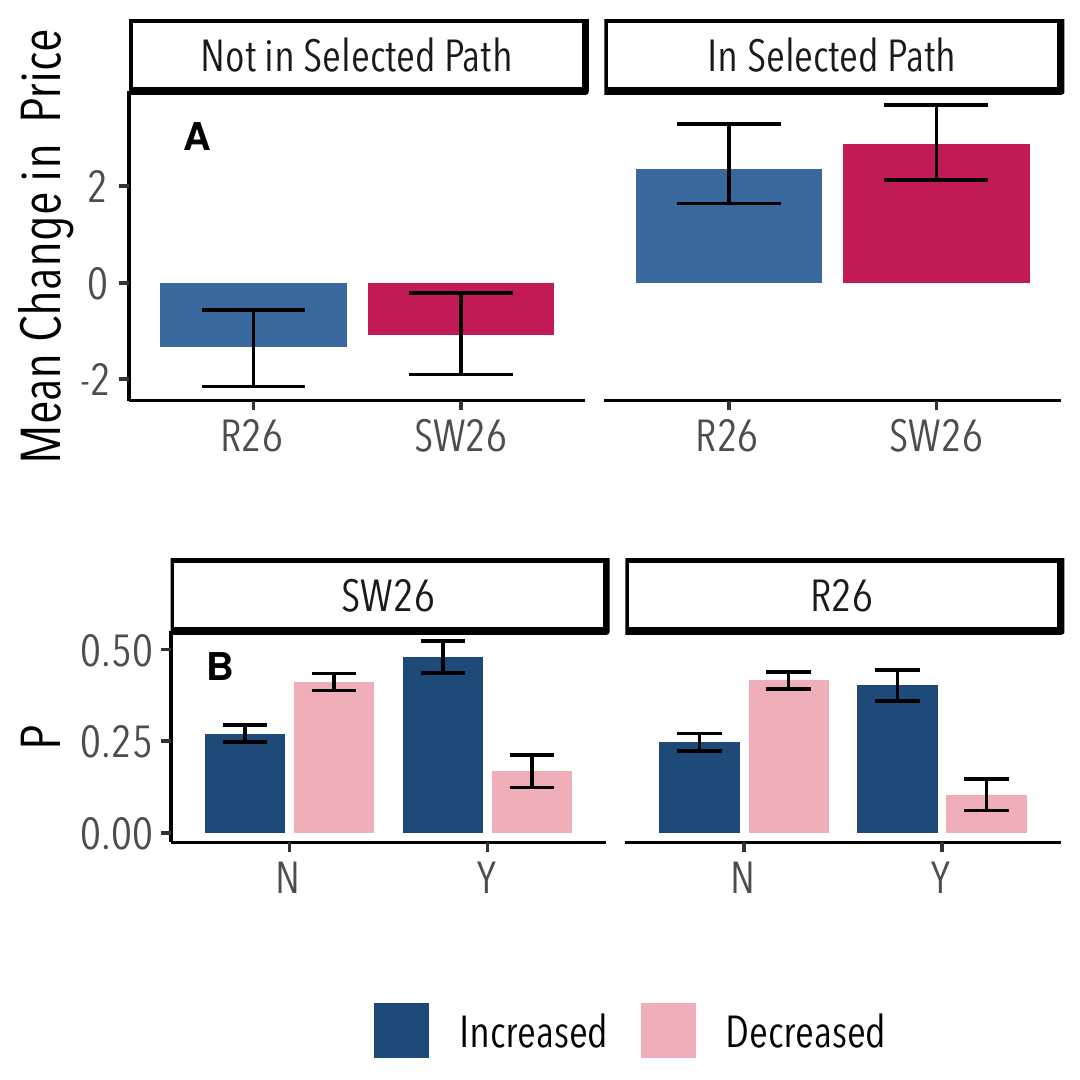}
\caption{
\textbf{Being or not in the cheapest path determines the intermediaries price increases.}
\textbf{A:} Mean changes in the posted price
conditioned to have been (right) or not (left)
in the selected cheapest path in the previous round
for the
random networks of 26 nodes (R26), and for the
small-world
network of 26 nodes (SW26).
\textbf{B:} Probability to increase
(red) and to decrease (blue) the posted price conditioned to have been (Y) or not (N) in the selected cheapest path, for each one  of the
studied networks. The error bars represent the 95$\%$ C.I. An extension of these results including the 50-nodes Random Network is displayed in SI Fig.S3.
}
\label{fig:fig3}
\end{figure}

\subsection*{Behavioral rules}

We have noted that participants' behavior is not determined by network position: criticality and classical measures of centrality are not good predictors of the prices posted by intermediaries. Nonetheless, results show differences in the prices posted by traders across different networks. Even if these networks might seem relatively small and similar, they are not. The environment (defined as the set of all the information that the individuals need to factor in their decisions) is very complex: there are many different paths passing through most of the traders, they need to take into account their price as well as those of other players, etc. It is thus reasonable to assume that the traders confronting such a complex and dynamic environment use rules of thumb, which on the other hand, should not depend on the network. In what follows, we develop a model that accounts for individual behavior and for the differences observed experimentally. 

Together with the network information, the other information shown to subjects is whether they were on the selected trading path.
Fig. \ref{fig:fig3}A shows, for each one of the networks considered,  the mean change in price for the cases when the participant was or was not along the cheapest path in the previous round. In the same way, Fig. \ref{fig:fig3}B shows the probabilities to increase and to decrease the posted price conditioned to have been (Y) or not (N) in the cheapest path. Players appear to follow a simple rule, namely, to increase their price if they were on the cheapest path in the previous round and to decrease it otherwise. Furthermore, the expected values shown in Fig. \ref{fig:fig3}A point out that successful intermediaries keep increasing their prices and therefore, without sufficient competition, costs and prices would always grow.

We now build a simple agent based model (ABM) \cite{Goldstone2005}, as described below:
\begin{itemize}
\item[i)] If node $u$ belongs to a cheapest path at time $t$, it will change its posted price on time $t+1$ by $\sigma$;

\item[ii)] If node $u$ does not belong to a cheapest path at time t, it will change its posted price on time $t+1$ by $\rho$;

\item[iii)] The minimum price a node can post is $0$.
\end{itemize}

\begin{figure}
\centering
\includegraphics[width=.6\columnwidth]{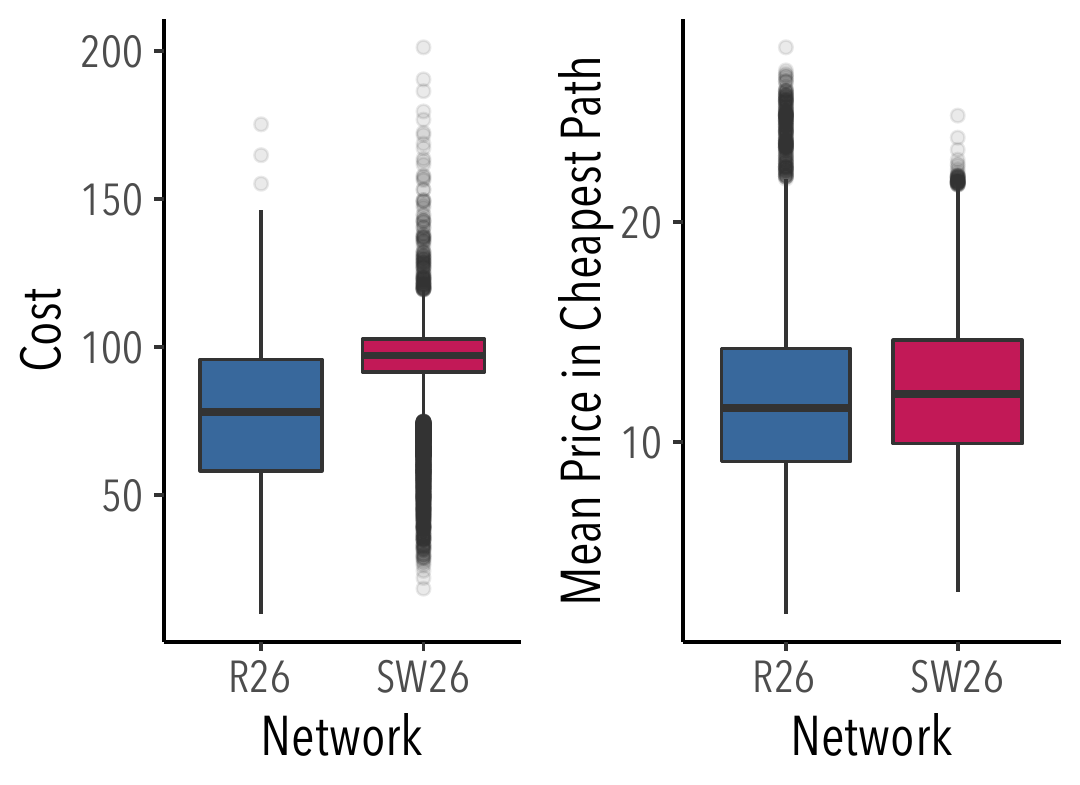}
\caption{\textbf{Numerical results of the model executed over the networks, source and destination from the experiments.} Results shown are for 100 executions with 15 rounds for each network and source-destination pair, excluding the first round. Initial prices are bootstrapped from the experimental values. Values of $\sigma$ and $\rho$ are fixed and correspond to to mean values of the experiment, respectively, $2.60$ and $1.2$.  \label{fig:newfig4}}
\end{figure}

To validate this model we executed it by bootstrapping the initial prices, the value of changes if on the cheapest path ($\sigma$) and the value of changes if not ($\rho$). The results, shown in Fig. \ref{fig:newfig4} and Table \ref{table2}, indicate that costs from simulations (resp. efficiency) are higher (resp. lower) in small-world networks than in random networks (t(9659.3)=68.33, $p<0.001$), in agreement with our experimental results. Costs reached relatively high values in some rounds, as the model does not incorporate participants direct response to the maximum cost threshold. Table  \ref{table2} confirms that topological differences between the networks are driving the differences in cost.

\begin{table}[tbhp]
\begin{tabular}{lrrrrr}
	\hline
	network & efficiency & price &  price in CP & cost & length \\
	\toprule
	R 26 & 0.87 & 15.14 & 11.87 & 75.68 & 7.62 \\
	SW 26 & 0.66 & 14.71 & 12.56 &  94.32 & 8.97 \\
	\botrule
\end{tabular}\\
\caption{\textbf{Numerical results in experimental networks.} Efficiency (fraction of rounds in which the cheapest path cost was equal to or less than the threshold), and mean values of the price, price in the cheapest path, cost of the cheapest path, and cheapest path length. Results obtained from numerical simulations with each one of the two studied networks with their corresponding source and destinations. }
\label{table2}
\end{table}

Once we have shown that the model captures very well the experimental observations, we can address our following objective. Specifically, we can use the model to extrapolate the behavior observed in the experiments to more general situations, such as other network sizes and longer times. This will allow us to uncover the role of the network properties affecting trading costs. To this end, we have performed numerical simulations of the model in networks as those used in the experiments as well as in larger ones. Results for networks of size 50 and 100 are shown in Fig. \ref{fig:fig5} and Table \ref{table3}, and they are also consistent with the experimental data. These results confirm that the network topology has a significant effect on trading outcomes: small worlds lead to higher costs and lower efficiency.

\begin{figure}
\centering
\includegraphics[width=.7\columnwidth]{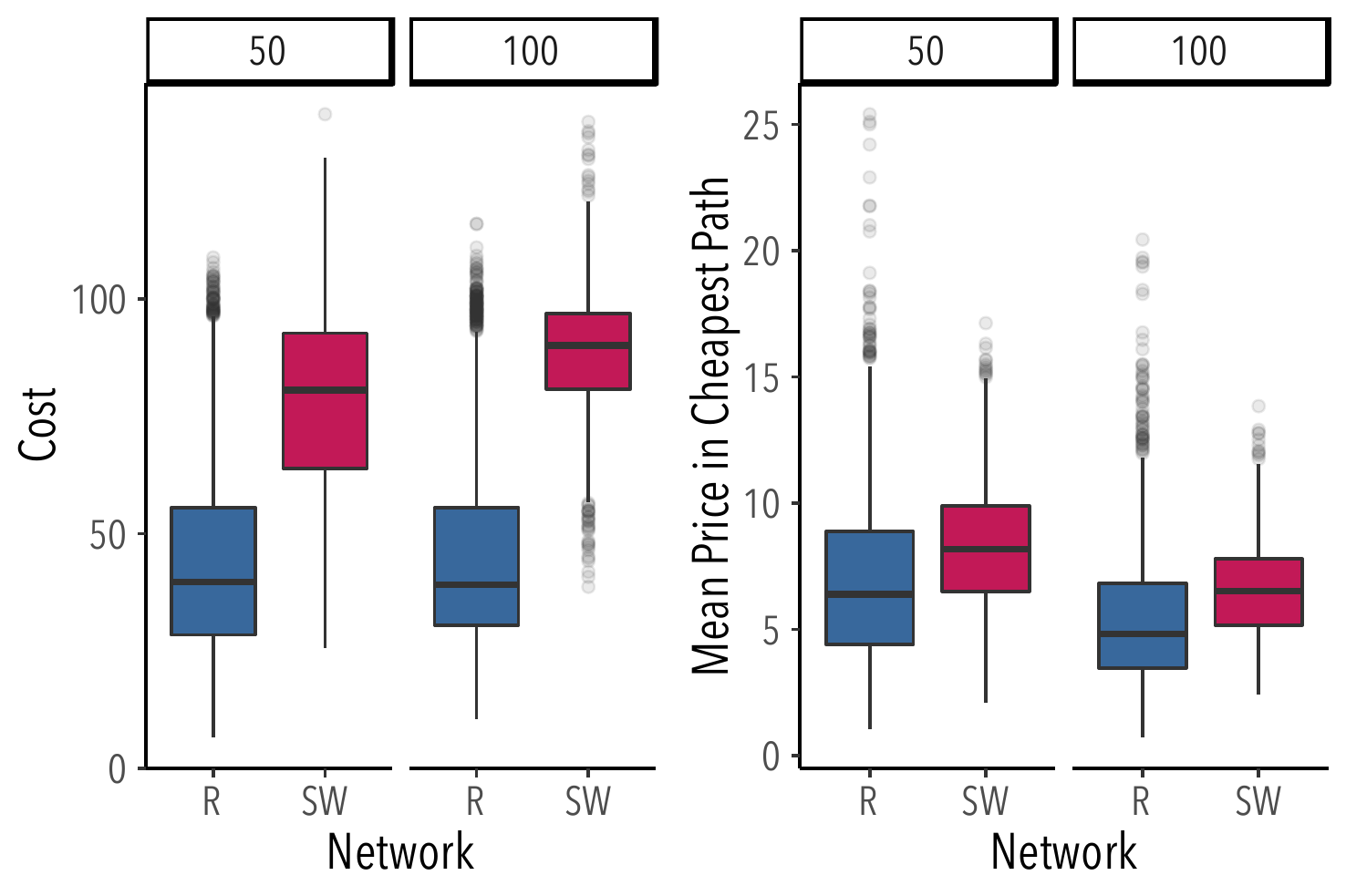}
\caption{\textbf{Numerical results of the model executed over the networks, sources and destinations used in the experiments.} Results shown are for 100 executions with 15 rounds for each network and source-destination pair, excluding the first round. Initial prices are bootstrapped from the experimental values. Values of $\sigma$ and $\rho$ are fixed and correspond to mean values of the experiment, respectively, $2.60$ and $1.2$. \label{fig:fig5}}
\end{figure}

\begin{table}[tbhp]

\begin{tabular}{lrrrrr}
	\hline
network & efficiency & price &  price in CP & cost & length \\
\toprule
R 50 & 0.98 & 13.12 & 7.03 & 44.76 & 7.68 \\
SW 50 & 0.91 & 14.32 & 8.28 &  77.44 & 10.74 \\
R 100 & 0.97 & 12.65 & 5.53 & 46.50 & 10.05 \\
SW 100 & 0.82 & 13.20 & 6.56 & 88.56 & 15.41 \\

\botrule
\end{tabular}
\caption{\textbf{Numerical results for larger networks.} Efficiency (fraction of rounds in which the cheapest path cost was equal to or less than the threshold), and mean values of the price, price in the cheapest path, cost of the cheapest path, and cheapest path length. Results obtained from numerical simulations with random networks with 50 and 100 nodes (R 50, R 100) for the small-world network with 50 and 100 nodes (R50, R 100).}
\label{table3}
\end{table}

Finally, we go one step further in order to explain what lies behind the differences found in costs. One possible theoretical hypothesis could be that costs depend on competition between paths. In our setup, this would be equivalent to assume that costs should decrease with the number of possible ways to reach the destination, \textit{i.e.}, the number of independent (sets of) paths from S to D. Specifically, we expect competition to  be proportional to the number $M$ of node-disjoint paths \cite{VanSteen2010}, as it captures the possible number of simultaneous independent trades\footnote{See SI Section 2.C.1 for a deeper discussion on this subject.}. According to this hypothesis, the larger the value of $M$, the lower the cost. Another possible explanation for the dependency of costs with the networks could be the structural differences between the latter. It is well known that clustering coefficients and average path lengths differ for the SW and the random networks considered in our experiments ($p \in \{0.1, 1\}$  \cite{watts1998collective}), and therefore the observed differences in cost could be tied to variations in those properties. 

In order to verify the previous hypotheses, we executed a version of the model without the maximum cost threshold. With this setup, we can study long-term effects after a sufficiently large number of rounds and uncover the cost tendency \footnote{In this regime, we cannot analyze network efficiency, however, networks yielding higher cost should be more inefficient.}. We ran the algorithm for $10^4$ rounds and then we considered the final cost of the trade for each configuration. Results for networks of size 26, 50 and 1000 nodes are shown in Figures \ref{fig:fig6}A,  \ref{fig:fig6}B, and  \ref{fig:fig6}C, respectively.

Simulations of trading dynamics on the aforementioned networks indicate that the number of node-disjoint paths ($M$) between $S$ and $D$ is the best indicator of final cost. Fig. \ref{fig:fig6}D shows that as $M$ grows, the costs are reduced so drastically that they go to 0 for  $M>3$. Moreover, the numerical results also reveal that for networks with the same value of $M$, the cost grows with the average path length. Indeed, this dependency explains why costs on small-world networks tend to be larger: these networks have a larger average path-length. To show that this finding is not a consequence of differences in the length of the cheapest paths, Fig. S5 of the SI displays, for the same simulations, the costs of the cheapest path normalized by the number of nodes on it versus the average path length of the network. It can be seen that the mean price of nodes in the cheapest path also correlates with the average path length. Interestingly, even though in this regime the difference in the clustering is larger than the difference in average path length, the former is not a good indicator of costs ($R^2=0.57$ vs $R^2=0.79$, see Section 2.C.2, table S3, and Figure S6 in the SI). In summary, these results provide two stylized facts that may guide future inquiries in this line, namely, trading costs will be null in setups with a relatively large number of node-disjoint paths and costs should be larger in networks with larger average path length. 

\begin{figure}
\centering
\includegraphics[width=.6\columnwidth]{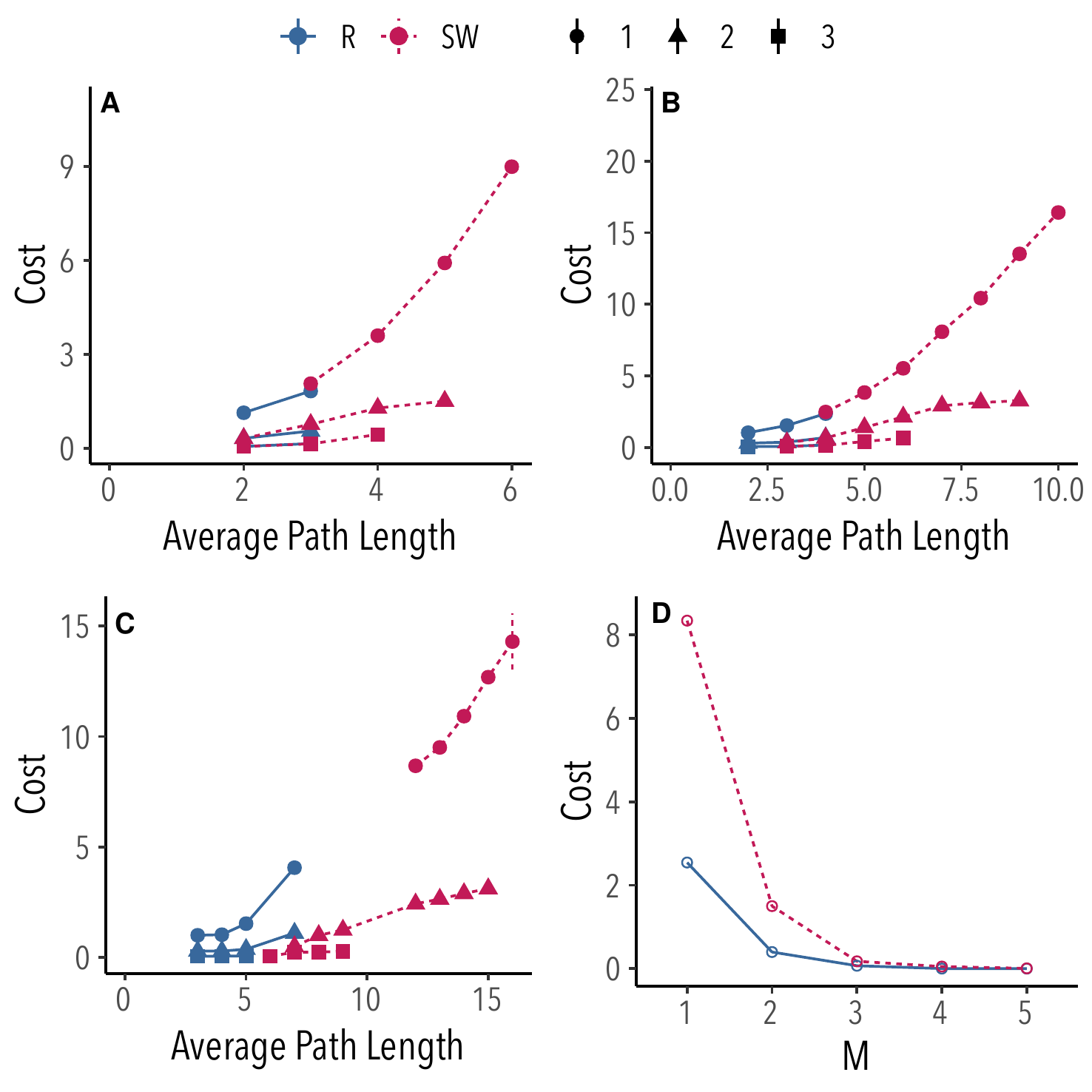}
\caption{
\textbf{Numerical results of the model. }\textbf{A,B,C}:
Average final cost (in $10^4$) of the cheapest path after a period of $10^4$ rounds as a function of the average path length of the network. Different panels correspond to different network sizes: 26 (\textbf{A}), 50 (\textbf{B}),
and 1000 (\textbf{C}) nodes; colors correspond to different network models: random (\textit{red}) and small-world (\textit{blue}); and different shapes correspond to different values of the number $M$ of disjoint paths.
For each configuration, there were generated 10000 networks of size 26,  50, and 1000, according
to the Watts-Strogatz algorithm \cite{watts1998collective}
with $p=0.1, 1$ and average degree from 2 to 10.
The initial cost was set to 0 and the increment/decrement ratio was
fixed to the experimental value
($\sigma/\rho=2.4$). Results for $M > 5$ are not shown as costs converge fast to 0.
\textbf{D:} Mean value of the cost of
the cheapest path versus $M$
for the same networks. See the main text
for further details.}
\label{fig:fig6}
\end{figure}

\section*{Conclusions}
Our experimental results indicate that the topological properties of a network have a powerful effect on pricing behavior of intermediaries and on the overall efficiency of the system. However, within a network, prices are relatively insensitive to node location, but intermediaries with greater betweenness make larger profits. Informed by the experimental results, we introduced an ABM of pricing behavior to understand traders' pricing. The key input of the model is the experimental observation that intermediaries raise prices when they lie on the cheapest path and lower their prices otherwise. The model successfully reproduced qualitatively the experimental results and allowed us to extrapolate and anticipate outcomes of pricing and efficiency to scenarios involving larger networks and longer timescales. Important enough, the model also enabled the discovery of what are the key determinants of cost, namely, the number of node-disjoint paths from source to destination and the network average path length. Ultimately, this explained the differences in our experimental results: in a small world network, the average path length tends to be larger and this leads to higher costs and lower efficiency of trading in these networks as compared to random networks.

Overall, our work reveals that the topology of trading networks is key to determine their efficiency and cost. It would be interesting to further test our conclusions using real data on trading, in particular, the finding that the availability of node-disjoint paths takes trading costs down. On the other hand, our insights may be useful for the design of competition-improved networks for goods currently overpriced due to intermediation. Further research on the role of information provided to intermediaries and on other network topologies will be also relevant to address these issues. 

\section*{Materials and methods}
We carried out 5 experimental sessions, each of which is composed of 4 consecutive series of 15 rounds each. All the networks have been generated through
the Watts-Strogatz algorithm \cite{watts1998collective} with different probabilities $p$ of rewiring ($p = 0.1$ for small-world-like networks and $p = 1$ for random networks). Mean degree was $\langle k\rangle =3$
for the 26 nodes networks and $\langle k\rangle =4$ for the network of 50 nodes. A representation of each network as viewed by the participants is shown in SI Fig. S1.
The experiment was conducted with 144 volunteers recruited from the volunteer pool of
the IBSEN project (http://www.ibsen.eu). The first experimental session was performed on 29 June 2017, jointly at Experimental Economics Labs of University of Zaragoza (UZ) and University Carlos III of Madrid (UC3M), with a random network of 50 nodes. A second experimental session was performed on 31 October 2017, at UC3M, with a SW network network of 26 nodes. Subsequent three sessions were performed at UZ on 14 November 2018 (RN, 26 nodes), 24 April 2018 (RN, 26 nodes), and 25 April 2018 (SW, 26 nodes). Table S1 of the SI shows the demographic data of the sessions.

All the participants played through a web interface after reading a tutorial (both included in SI Materials and Methods) on the screen. When everybody had gone through the tutorial, the experiment began, lasting for approximately 90 minutes. At the end of the experiments, all participants received their earnings and their show-up fee. Total earnings in the experiment ranged from 5 to 82 Euros, average earning was 18.4 Euros.

All participants signed an informed consent to participate. Besides, their anonymity was always preserved (in agreement with the Spanish Law for Personal Data Protection) by assigning them randomly a username which identified them in the system. No association was ever made between their real names and the results. This procedure was checked and approved by the Clinical Research Ethical Committee of IACS, Aragon Government.

\begin{acknowledgements}All authors have been partially supported by the EU through FET Open Project IBSEN (contract no. 662725). A.S.  acknowledges partial support by MINECO/FEDER (Spain) through grant FIS2015-64349-P VARIANCE (AS). Y.M. acknowledges partial support from the Government of Arag\'on, Spain through a grant to the group FENOL (E36-17R), by MINECO and FEDER funds (grant FIS2017-87519-P) and by Intesa Sanpaolo Innovation Center. The funders had no role in study design, data collection and analysis, or preparation of the manuscript.
\end{acknowledgements}

\bibliography{Trading_PNAS.bib}

\FloatBarrier
\newpage

\begin{center}
	\textbf{\large Supplemental Materials: Trading in Complex Networks}
\end{center}
\setcounter{equation}{0}
\setcounter{figure}{0}
\setcounter{table}{0}
\setcounter{page}{1}
\makeatletter
\renewcommand{\theequation}{S\arabic{equation}}
\renewcommand{\thefigure}{S\arabic{figure}}
\renewcommand{\thetable}{S\arabic{table}}
\renewcommand{\bibnumfmt}[1]{[S#1]}
\renewcommand{\citenumfont}[1]{S#1}

\section{Experimental setup}  \label{sec:ExperimentalSetup}

\subsection{Volunteers’ recruitment and experimental platform} \label{subsec:RecruitmentAndSessions}

The experiment was carried out with volunteers chosen among the pool of volunteers of the EU project IBSEN (ibsen-h2020.eu). This pool consists of more than 24,000 volunteers
(24,210 at the time of the experiment) from different countries to perform social experiments. We restricted the call to volunteers residing in Madrid or Zaragoza (10,408 volunteers), where the experiments were performed. Following the call for participation, we selected 144 volunteers,
whose demographic data are shown in Table \ref{tableS1}.
In order to satisfy ethical procedures, all personal data about the participants were anonymized and treated as confidential.

\setlength{\tabcolsep}{3pt}

\begin{table}[h]
	\centering

	\begin{tabular}{lrrrrrrr}
		Network & Session & Number of participants &  Mean age (SD) & Range age & women & City & Date\\
		\toprule
		R 50 & 1 & 48 & 32.42 (11.84) & 21-64 & 64.58\% & Madrid and Zaragoza & 29 JUN 2017\\
		R 26 & 1 & 24 & 39.54 (12.50) & 19-61 & 70.83\% & Zaragoza & 14 NOV 2017 \\
		R 26 & 2 & 24 & 34.29 (15.32) & 20-66 & 62.50\% & Zaragoza & 24 APR 2018 \\
		SW 26 & 1 & 24 & 22.67 (2.39) & 19-31 & 20.83\% & Madrid & 31 OCT 2017 \\
		SW 26 & 2 & 24 & 34.29 (15.32) & 20-66 & 62.50\% & Zaragoza & 25 APR 2018\\
		\botrule
		Aggregated & & 144 & 33.55 (13.76) & 19-72& 57.64\% \\
		\botrule
	\end{tabular}
	
	\caption{\textbf{Experimental sessions}. Demographic data of the participants.
	R50 stands for a random network of 50 nodes, R26 for a random network of 26 nodes
	and SW26 for a small-world-like network of 26 nodes. For the R26 and SW26 networks, two experimental sessions were carried out.
	The R50 session was held jointly in Madrid and Zaragoza, the rest of sessions were
	performed either in Madrid or in Zaragoza.}
	\label{tableS1}
\end{table}

\setlength{\tabcolsep}{10pt}

The experiment was run using an application based on the oTree platform \cite{chen2016otree}. At the beginning of each session, all participants were located at a desktop computer spot. When everybody had read the instructions, the first series began. Each series lasted 15 rounds —approximately 15 minutes. At the end of the fourth series, the experiment finished and all participants received their earnings plus a show-up fee of 5 euros. Total earnings ranged from 5 to 82 euros (mean was 18.4), including the show-up fee, and the sessions lasted between 70 and 85 minutes (mean was 77), including instructions reading. The experiment is described in the section `Experimental setup' of the main text. Both the sessions and the networks are described in the section `Materials and methods' of the main text. 

\subsection{Description of networks}

All the networks used in the experiment are generated through the Watts-Strogatz \cite{watts1998collective} algorithm with different probabilities $p$ of rewiring ($p = 0.1$ for the small-world-like network and $p = 1$ for the random networks). 
In any given treatment (R26, SW26, and R50), the same network was used across all series and sessions. However, the selection of source-destination pairs is generated randomly at the beginning of each series such that the shortest path between the two nodes is of distance at least diameter - 2 (\textit{resp.} - 1) for the R26 and SW26 (resp. R50) networks (as a means to prevent uninteresting scenarios with very short distance between source and destination nodes). 
Mean degree is $3$ for the 26 nodes networks and $4$ for the network of 50 nodes. Fig. \ref{FigureExperimentalNetworks}
depicts the structural representation of the network as viewed by the subjects in each corresponding treatment: Random Network of 26 nodes (left), Small-World-Like Network of 26 nodes (center) and Random Network of 50 nodes (right).

\begin{figure}[H]
	\centering
	\includegraphics[width=0.32\textwidth]{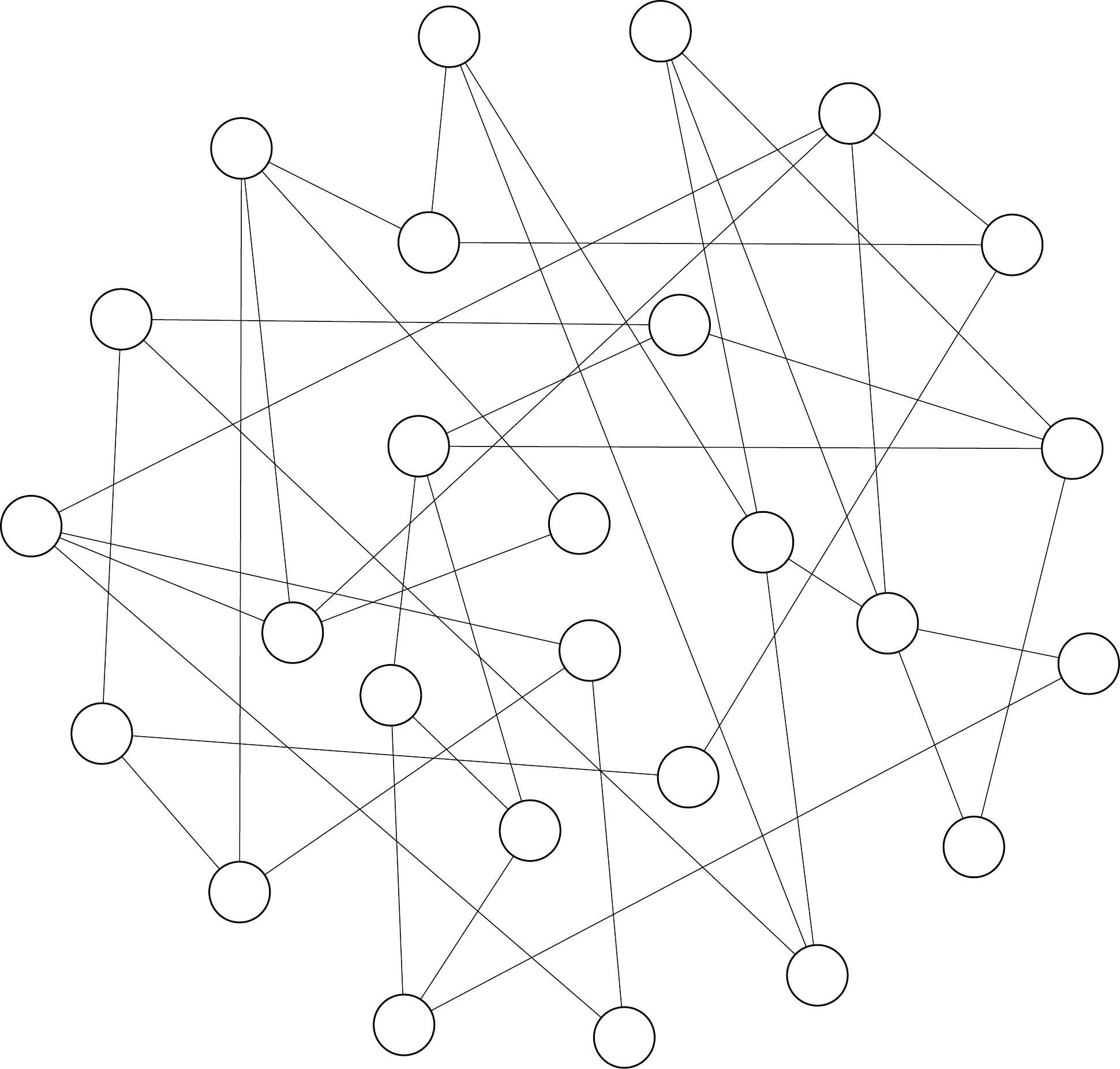}
	\includegraphics[width=0.32\textwidth]{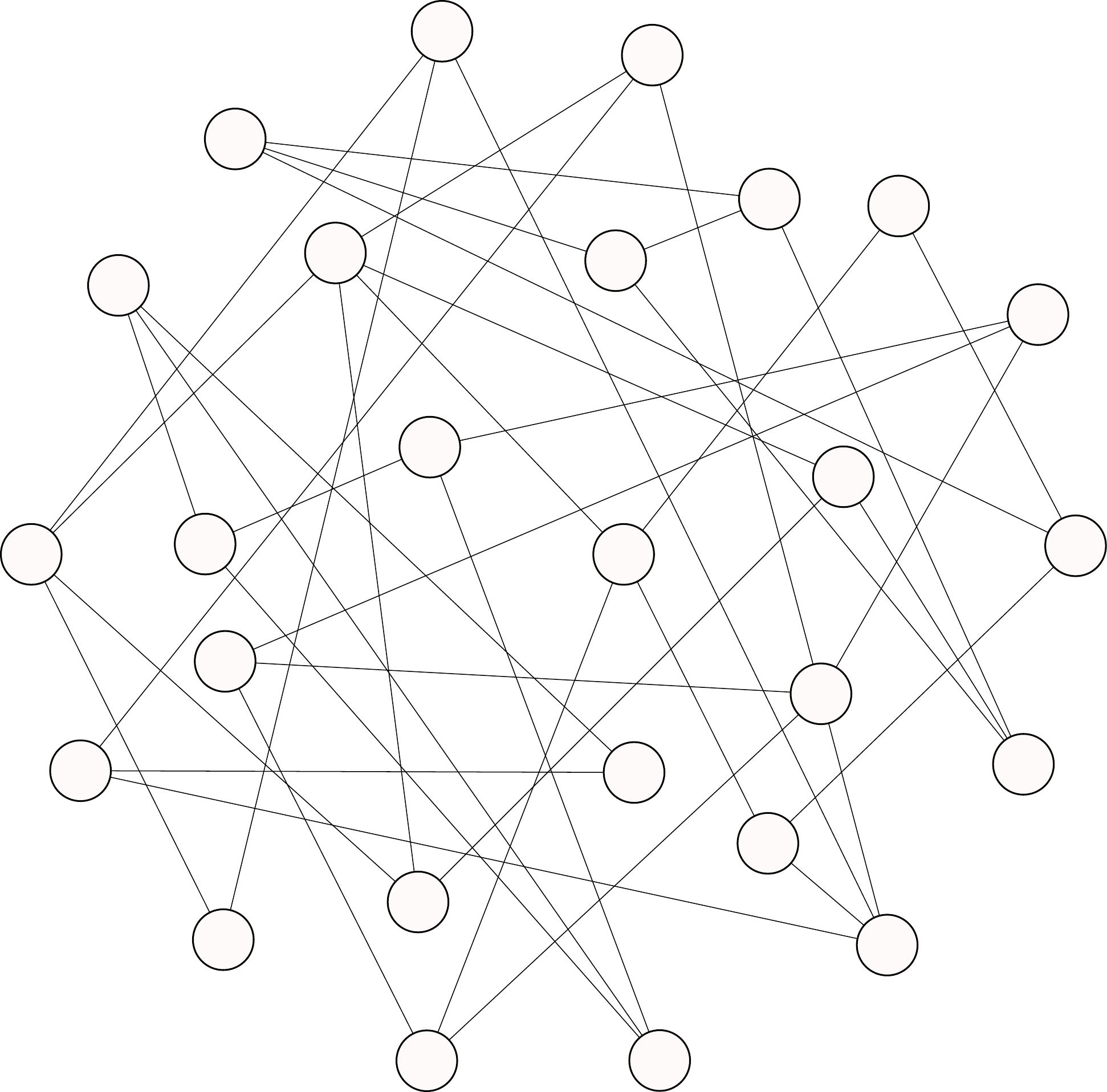}
	\includegraphics[width=0.32\textwidth]{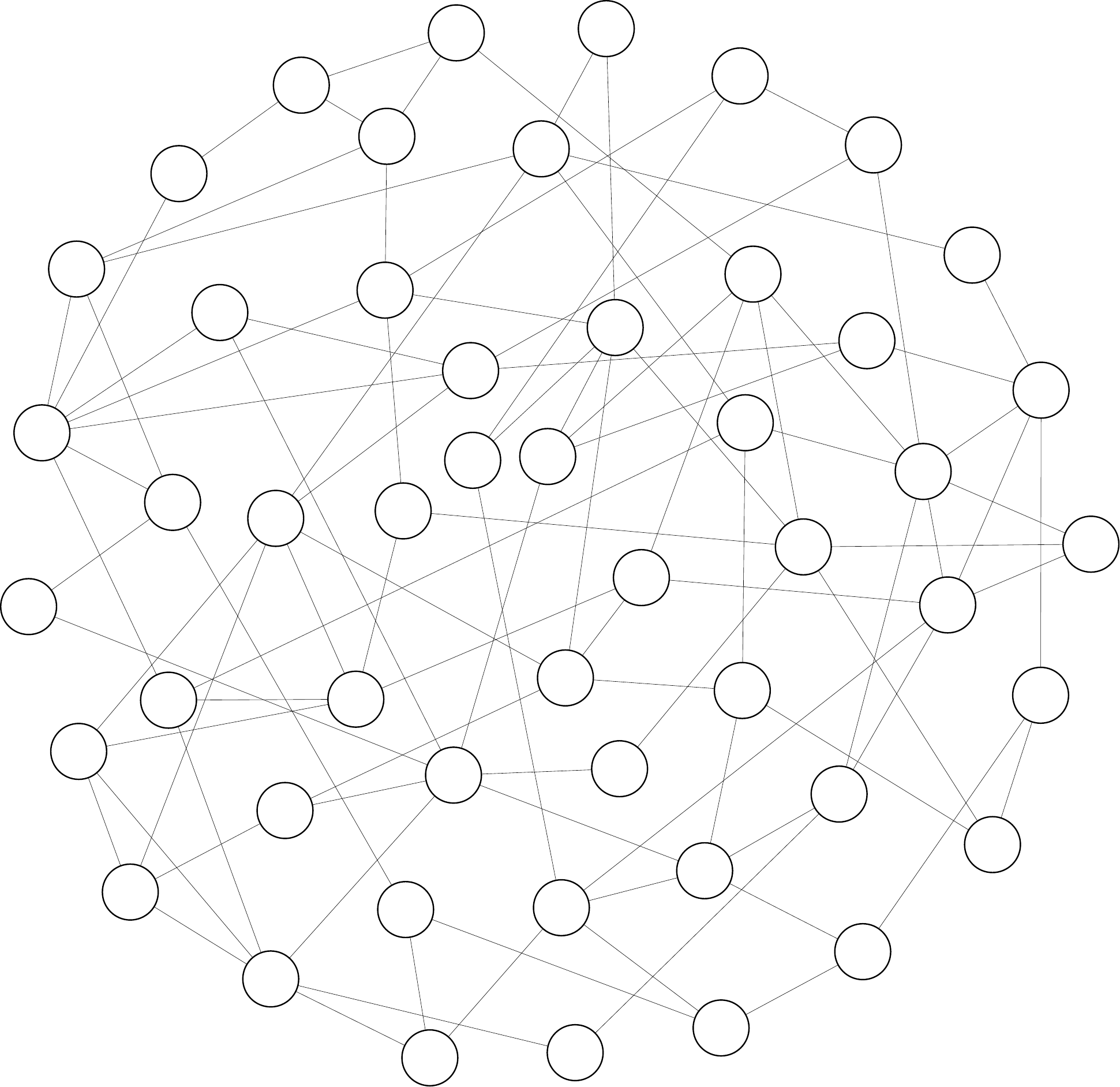}
	\caption{Networks used in the experiments: 26-nodes Random Network (left), 26-nodes Small-World-Like network (center) and 50-nodes Random Network (right).}
	\label{FigureExperimentalNetworks}
\end{figure}

\subsection{Instructions for players}

Since there were three possible networks
(random networks of 50 and 26 nodes and small-world-like network of 26 nodes), there were three different instructions that differentiated only in this single aspect. In what follows a translation of the original Spanish instructions (available upon request) for the 26-nodes random network is included.

\vspace{0.8cm}
\textbf{Instructions}
\vspace{0.2cm}

Thank you for participating in this experiment, that is part of a research project in which we try to understand how individuals make decisions. You re not expected to behave in any particular way. At this moment the experiment begins. Please keep quiet until the end, turn your cell	phone off, and remember that the use of any material foreign to the experiment is not allowed (including pen, pencil or paper).

Your earnings will depend on your own decisions and those of the other participants. Additionally, you will receive 5 \euro{} for participating in the experiment until the end.

Please keep quiet during the experiment. If you need help, raise your hand and wait to be assisted. Please do not ask any question aloud.

You participate along with other people with whom you interact according to the rules explained below. The session lasts about an hour and a half. The following instructions are the same for every participant of this experiment.

Once completed the session, you will receive 5 \euro{} for participating, along with your earnings corresponding to the rounds, once converted into euros. For convenience, the total earnings are rounded up to the nearest 50 cents.

You will access the experiments after reading these instructions. When all participants have accessed, the rounds will begin.

You are going to participate in 4 experiments. Each experiment consists of 15 rounds. Before starting each experiment, all the players, you included, will be randomly located in the nodes of the network shown below. Your position in the network will be denoted with the letter `\textbf{M}' (for me). In the same way, two different nodes will be chosen as Source (\textbf{S}) and Destination (\textbf{D}) respectively. Their position in the network will be denoted with the letters `\textbf{S}'  and `\textbf{D}'. \textbf{All the players will remain in the same position during each experiment of 15 rounds.} In the same way, the source and destination will remain in the same position throughout each experiment of 15 rounds. The players will play the role of intermediaries. A good must be transported from \textbf{S} to \textbf{D} generating a benefit of 100 tokens for all players involved (\textbf{S}, \textbf{D}, and all nodes in the path between them). Intermediaries (that is, the players) simultaneously have to post the fraction of these 100 tokens they would like to charge if selected, which must be between 0 and 100 tokens. \textbf{You will have 60 seconds to post your price.} If you do not post a price, the computer will decide for you: please do not run out your time and make your own decision.

This is the screen you will see in the first round (this screenshot is only an example):

\begin{figure}[H]
	\centering
	\includegraphics[width=0.85\textwidth]{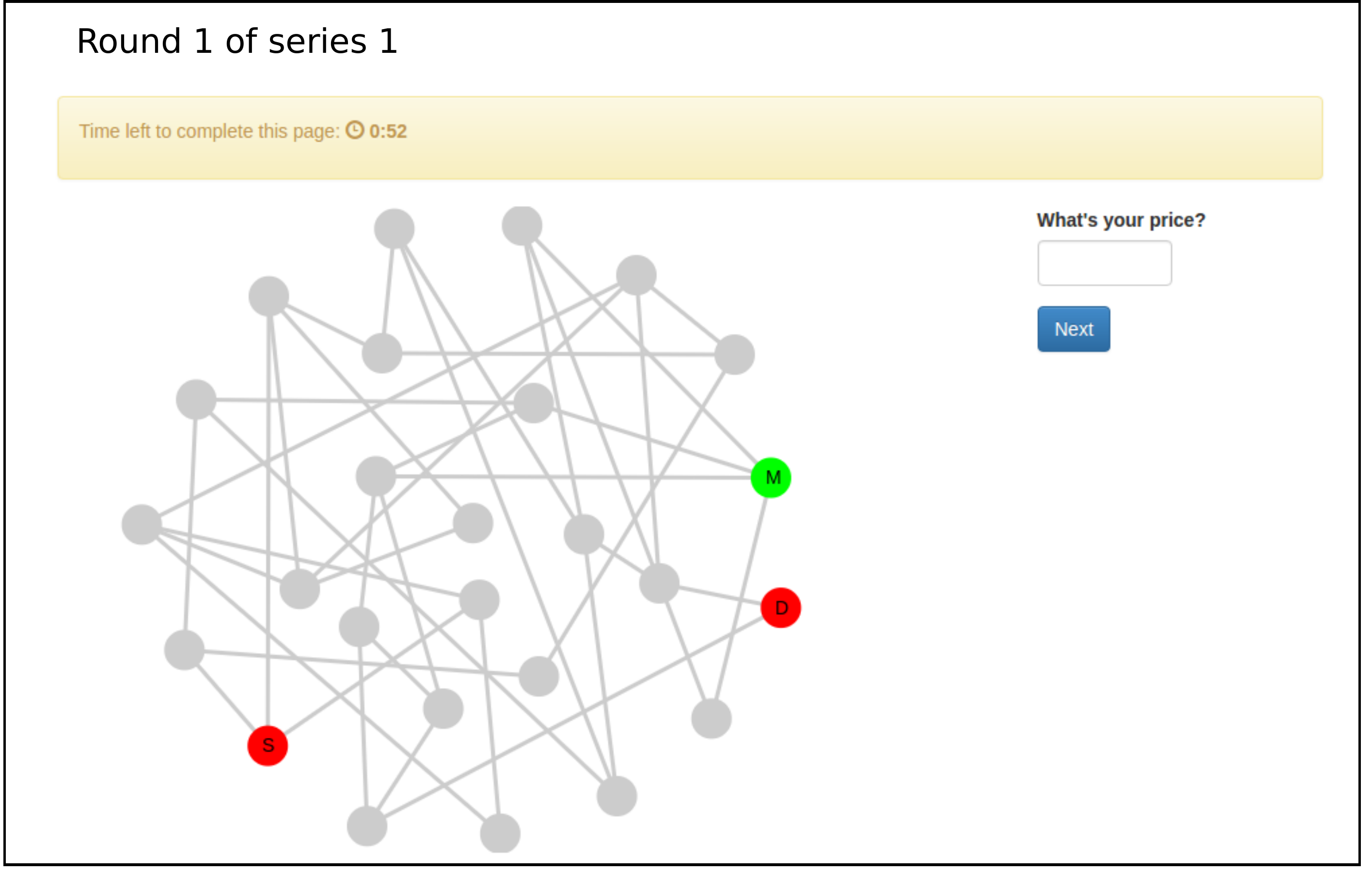}
	\label{fig:screenshot1}
\end{figure}

The sum of prices along any given path between \textbf{S} and \textbf{D} determine a total cost. Once all the intermediaries have posted a price, the cheapest path (with lowest total cost) from \textbf{S} to \textbf{D} will be selected. If the total cost of the cost cheapest path is less than or equal to 100 tokens, the good will be taken from \textbf{S} to \textbf{D}. Otherwise, that is, if all the paths from \textbf{S} to \textbf{D} cost more than 100 tokens, there will be no deal and no value will be generated. Ties are broken randomly, that is, if there are more than one cheapest path, one of them will be selected at random.

Your payoff in this round will be:

a) If you are located on the selected cheapest path, you will receive your price as payoff.
b) Otherwise, that is, if you are not on the selected path, you will not receive any payoff in that round.
S and D will receive, equally distributed, the rest of the 100 tokens.

From the second round on, you will be informed about whether there was a deal in the previous round, and if so what was the selected cheapest path, and the costs of this path. You will also be informed about the cheapest path through your node regardless of whether this was the selected cheapest path. The selected path will be highlighted by a dashed red line, while the cheapest path through your node will be highlighted by a blue solid line. Note that the cheapest path through your node may contain loops, i.e., it may pass more than one time through some nodes. With this information on the screen, you must set a price for the current round. At the end of each experiment, the positions of all players, the source, and the destination will be randomly reassigned, and a new experiment of 15 rounds will begin.

This is the screen you will see in the subsequent rounds (this screenshot is only an example):

\begin{figure}[H]
	\centering
	\includegraphics[width=0.85\textwidth]{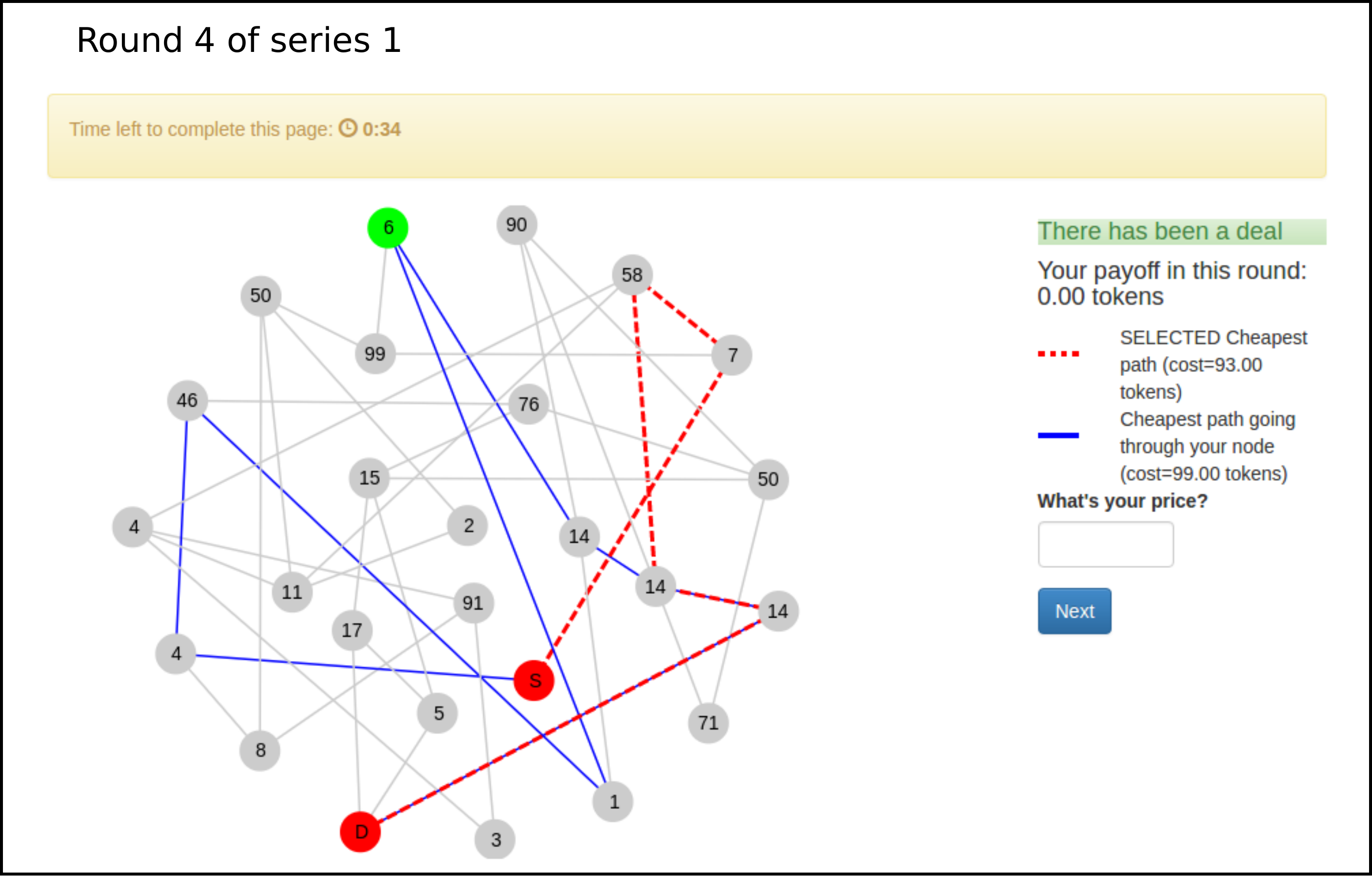}
	\label{fig:screenshot2}
\end{figure}

Please, click the below NEXT button to start:

[NEXT]

\section{Additional results}

Here, we present additional results aimed to support the findings shown in the main text.

\subsection{50 nodes random network}

The robustness of the findings
shown in the main text against the size and connectivity of the network
are explored through an additional experimental session in a
random network of 50-nodes with $\langle k \rangle =4$. Table
\ref{tableS2} puts together the results corresponding to the
random network with 50 nodes and those corresponding to 26-nodes networks.
Although the larger network shows lower costs than the smaller ones (actually, for the 50-nodes network intermediation rents are close to zero), both
the correlation between payoffs and SD-betweenness (Fig. \ref{fig:si.ppnet}) 
and the behavioral rule (Fig. \ref{fig:changeInPrice})
are verified in the network of 50 nodes, as will be discussed in the next paragraphs.

\begin{table}[h]
	\centering

	\begin{tabular}{lrrrrrr}
		\hline
		network & efficiency & price &  price in CP & cost & profit & length \\
		\toprule
		R 50 & 1 & 7.85 & 1.18 & 5.67 & 0.12 & 5.78 \\
		R 26 & 0.97 & 11.34 & 5.49 & 28.33 & 1.10 & 6.26 \\
		SW 26 & 0.68 & 18.10 & 13.16 &  76.52 & 2.38 & 7 \\
		\botrule
	\end{tabular}

	\caption{\textbf{Experimental results.} Efficiency (fraction of rounds
	in which the cheapest path cost was
	equal to or less than the threshold), and mean values of the price, price on the cheapest path, cost of the cheapest path,  profit, and cheapest path length
	for each one of the three studied
	networks: random networks with 50 and 26 nodes (R 50, R 26) and small-world network with 26 nodes
	(SW 26).}
	\label{tableS2}
\end{table}


Fig. 2 of the main text shows the prices, payoffs and frequencies on the cheapest path for all realizations. This result indicates a correlation between profits and the source-destination betweenness ($sd_{\infty}$). To ensure that this result is not spurious, Fig. \ref{fig:si.ppnet} shows how those variables correlate with the centrality in each network. It is clear that the same behavior is maintained when we look at the results for each network. 

\begin{figure}[ht]
	\centering
	\includegraphics[width=0.8\textwidth]{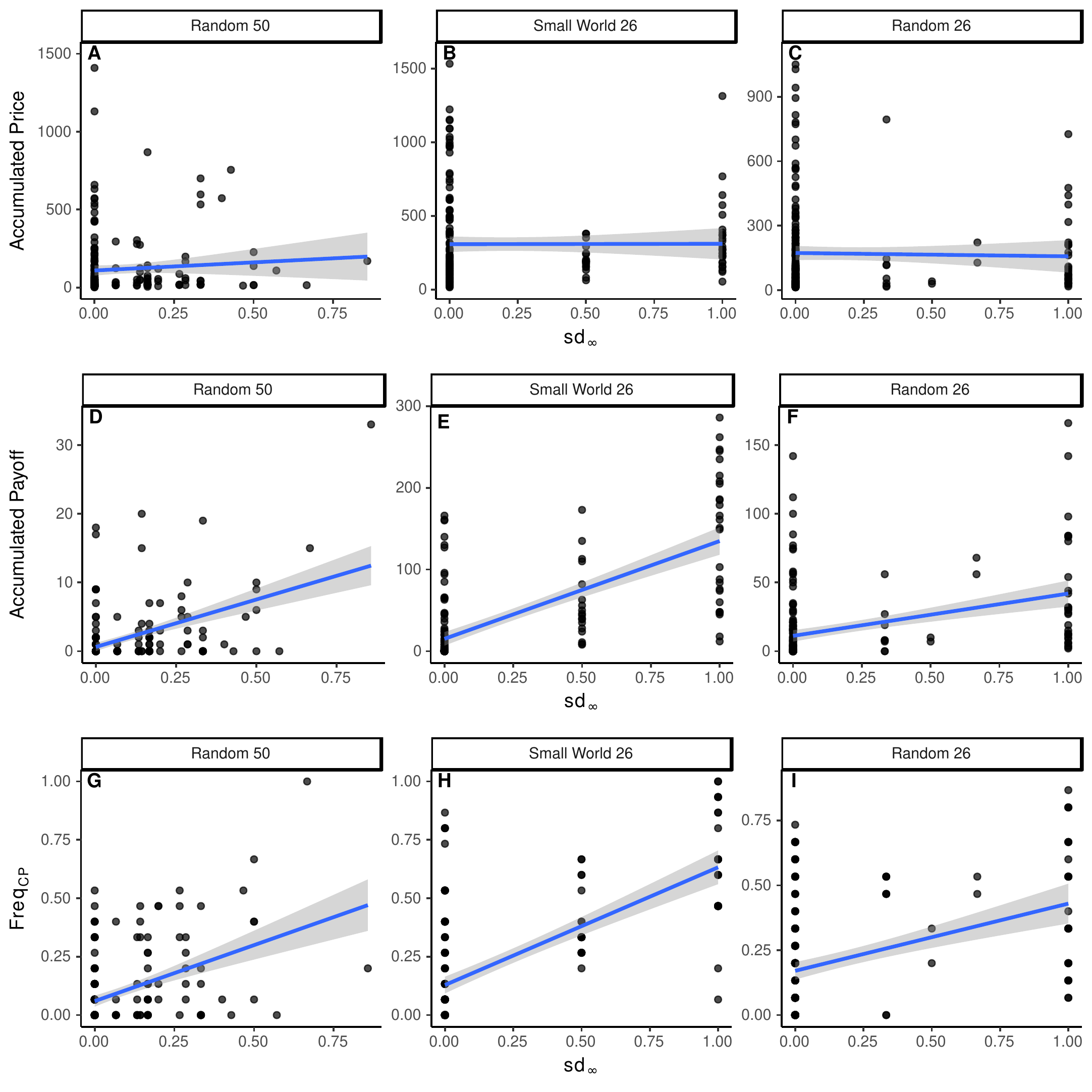}
	\caption{\textbf{SD-betweenness determines payoffs but not
		posted prices.}
		\textbf{A-I}: Accumulated price (\textbf{A-C}), accumulated payoff (\textbf{D-F}) and frequency on the cheapest path (\textbf{G, I}) of participants during a series of 15 rounds as a function of 
		the  SD-betweenness $sd_\infty$ for the Random 50 (\textbf{A,D,G}), Small World 26 (\textbf{B,E,H}), and Random 50 networks(\textbf{C,F,I}).}
	\label{fig:si.ppnet}
\end{figure}

As an extension of Fig. 3 of main text, Fig. \ref{fig:changeInPrice}A shows the mean change in price for the cases
where the participant was or was not on the cheapest path in the previous round, while Fig. \ref{fig:changeInPrice}B shows the probabilities to increase and to decrease the posted price conditioned to have been (Y) or
not (N) on the cheapest path. In this figure, the results corresponding to
the random network with 50 nodes have been added showing that,
within the limitations of the current experiment,
the behavioral rule according to which players increase their price if they were on the cheapest path in the previous round and decrease it otherwise 
is robust against the size and connectivity of the network.

\begin{figure}
	\centering
	\includegraphics[width=0.7\linewidth]{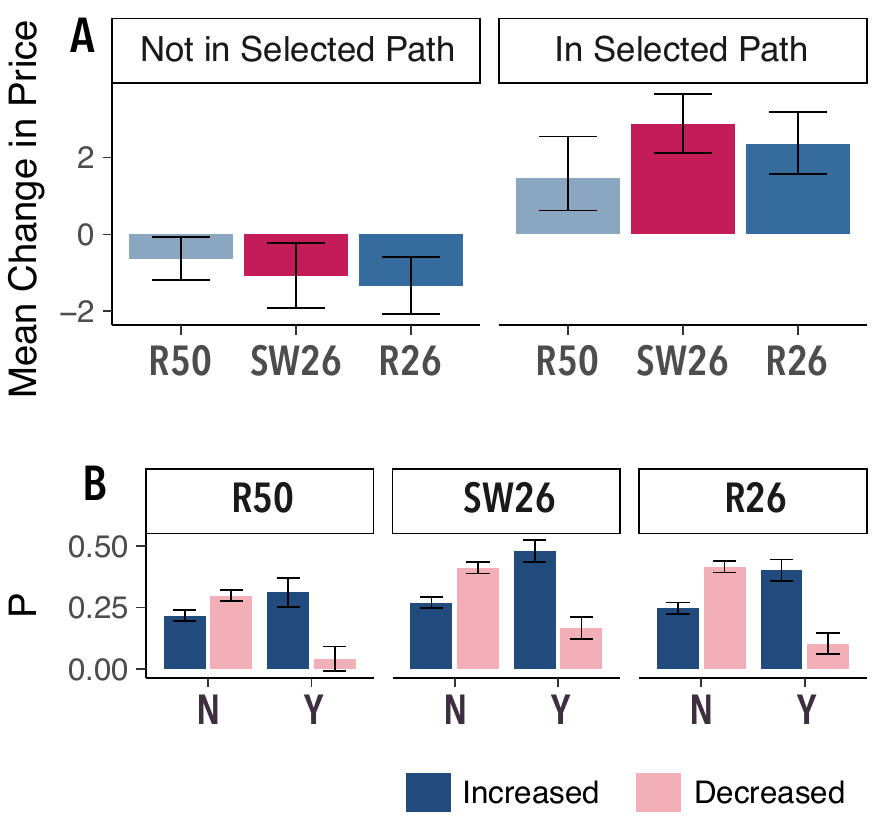}
	\caption{\textbf{Extension of figure 3 of the main text with the 50-nodes random network.}
		\textbf{A:} Mean changes in the posted price for the studied networks: random network of 50 nodes (R50) and 26 nodes (R26), and small-world network of 26 nodes (SW26). The panel discriminates the cases in which the participant was (right) from those in which she was not (left) on the selected cheapest path in the previous round.
		\textbf{B:} Probability to increase
		(red) and to decrease (blue) the posted price conditioned to have been (Y) or not (N) on the selected cheapest path, for each one of the
		studied networks.
		The error bars represent the 95$\%$ C.I.
	}
	\label{fig:changeInPrice}
\end{figure}

\subsection{Additional experimental results: Evolution of costs and prices}

Fig. \ref{fig:chpt.evo}A shows the evolution of the cheapest path cost for each network. As described in the main text, the costs are significantly higher in the Small-World network than in the Random Network. Among the random networks considered, the costs are lower in the larger network (50 nodes) than in the smaller one (26 nodes). These higher costs in the SW network entail a lower efficiency, as can be seen in table \ref{tableS2}. 
To deepen this issue, Fig. \ref{fig:chpt.evo}B represents the evolution  of the mean price of participants on the cheapest path for each network. Note that the pattern is consistent with Fig. \ref{fig:chpt.evo}A, pointing out that the previous finding was not a consequence of different cheapest path lengths. To complement these results, Fig. \ref{fig:chpt.evo}C and \ref{fig:chpt.evo}D show, respectively, the evolution of the mean and median price posted by all the participants. As has been discussed in the main text, the effect of the network topology on prices (Fig. \ref{fig:chpt.evo}C, \ref{fig:chpt.evo}D) is not as clear as it is on costs and payoffs 
(Fig.  \ref{fig:chpt.evo}A, \ref{fig:chpt.evo}B) since the topology mainly affects the probability for the nodes to be on the cheapest path more than the posted prices.

\begin{figure}[h]
	\centering
	\includegraphics[width=0.8\textwidth]{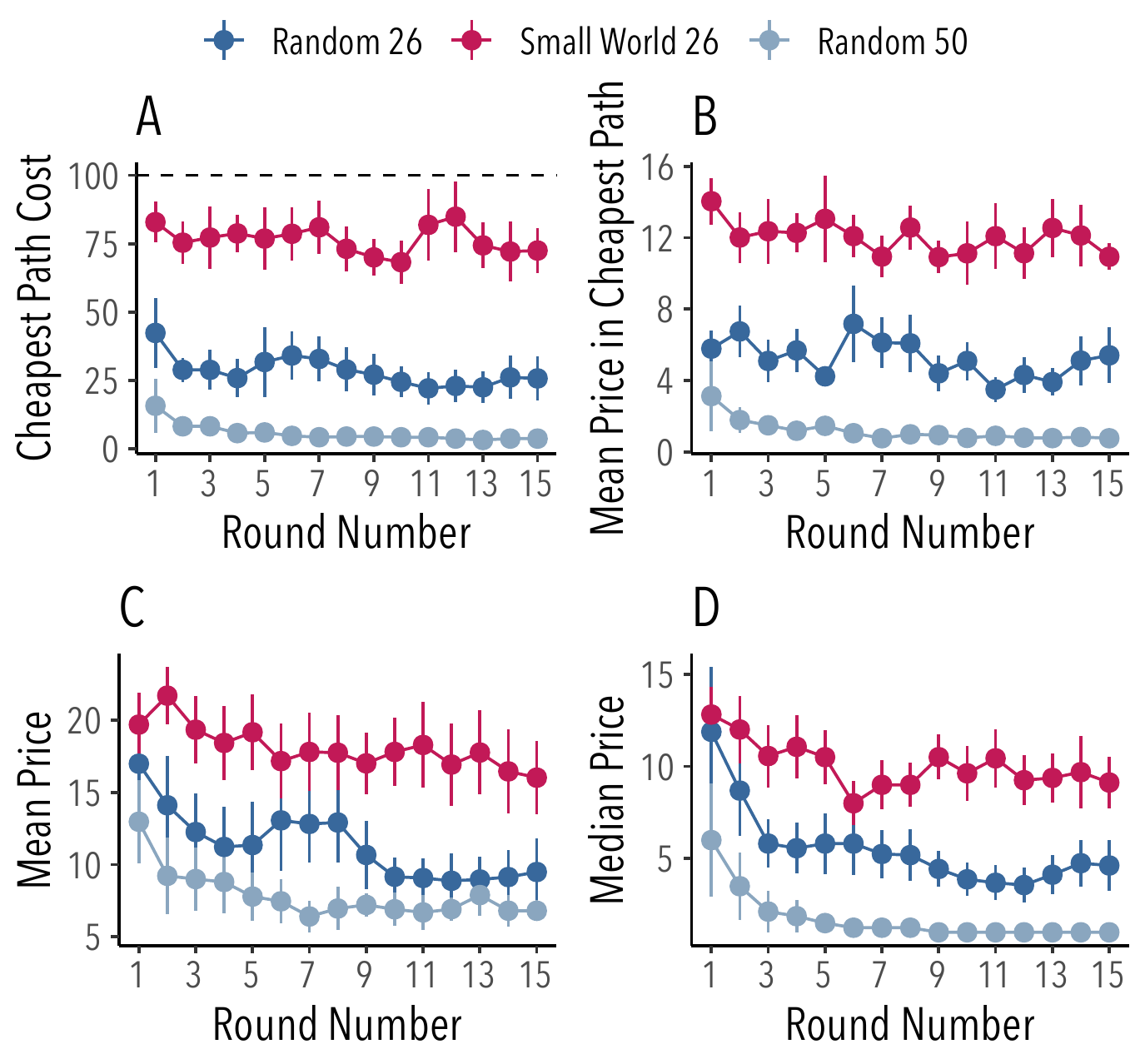}
	\caption{\textbf{Evolution of costs and prices for each experimental network.} Cheapest path cost (\textbf{A}), mean price of nodes on the cheapest path (\textbf{B}), mean price (\textbf{C}), and median price (\textbf{D}) as a function of the round number. Each series of points corresponds to a given network: Random Network with 50 nodes (gray), Small-World with 26 nodes (orange), and Random Network with 26 nodes (blue). The error bars represent 1.96 $\times$ SEM.}
	\label{fig:chpt.evo}
\end{figure}

\subsection{Additional model results: Final costs and network metrics} \label{sec:si.cost}

\subsubsection{Node-disjoint paths} \label{sec:si.cost.M}

In the game, nodes in a path between the source and the destination constitute a group, and all such groups can be seen as competing to represent the cheapest path. At the same time, individuals within those groups try to maximize their own profit, and as such are competing with other group members. Therefore, with a large number of paths there will be a large number of groups competing to be the cheapest path, with one caveat: paths do not constitute disjoint sets of nodes, which implies that nodes compete concurrently in a large number of groups. These conditions make the assessment of a good descriptor of the competition in a network a non-trivial task. Nonetheless, we can provide a lower bound estimator based on the minimum number of independent groups that can compete with each other using the number of node-disjoint paths\footnote{There is also an important practical fact to take into consideration: metrics based on counting all paths are unfeasible for relatively large and dense networks, as the number of paths grows exponentially.  Fortunately, computing the maximum number of independent paths is reducible to the maximum flow problem, thus it can be computed in polynomial time.}.

The results shown in Fig. 6 of the main text show that,
when considering the behavioral rule found, the number $M$ of node-disjoint paths is
a particularly good indicator of the final trading costs. With more independent paths, more
groups of different nodes can coordinate resulting in a cheaper cost.
To illustrate the mechanism behind this relationship, we propose the following simple scenario wherein a formal relationship between the two variables can be demonstrated:
i) all the node-disjoint paths are shortest paths
and ii) all the nodes start with the same price. For this kind of scenario, we can see a clear relation between costs and $M$, as shown by Lemma \ref{thr:t1}. Indubitably, this lemma does not provide us a rule for how costs will change in all the possible networks. Nonetheless, it provides a useful insight into how the competition between paths for the cheapest path should be related to $M$.

\begin{lemma}\label{thr:t1}
	Let us consider a graph $G$, a source $S$, a destination $D$, and identical initial prices across all nodes. Let us assume that nodes increase their posted prices by $\sigma$
	if they were located in the previous round in the cheapest path, otherwise decrease it by $\rho$. If there are $M$ node-disjoint shortest paths of the same length
	between $S$ and $D$, the cheapest path cost will increase indefinitely if only
	if $$\frac{\sigma}{\rho}>(M - 1)$$
\end{lemma}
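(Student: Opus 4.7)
The plan is to exploit the symmetry of the identical initial prices to reduce the dynamics to a deterministic round-robin over the $M$ disjoint shortest paths, compute the net price change in one full cycle of $M$ rounds, and read off the claimed iff condition from the sign of that change. The starting observation is that, by an easy induction on the round, all nodes on any single one of the $M$ disjoint shortest paths share a common posted price at every round: the claim holds at time $0$ by hypothesis and is preserved under the update rule, since the selected cheapest path is one of the $M$ (see below) and the nodes on any one of the disjoint paths are then either all selected or all not selected. Hence the state can be summarized by just $M$ per-path prices.

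I then argue by induction that over any window of $M$ consecutive rounds each of the $M$ disjoint paths is selected exactly once. Starting from all $M$ path costs equal to $Lp_0$, tie-breaking picks some path $\pi_1$; its per-node price becomes $p_0+\sigma$, while every other disjoint path's per-node price drops to $p_0-\rho$, so the remaining $M-1$ paths are tied and strictly cheaper (because $\sigma+\rho>0$). Continuing, after $k<M$ selections $\pi_1,\dots,\pi_k$ within the cycle, the selected paths share the per-node price $p_0+\sigma-(k-1)\rho$ and the unselected ones share $p_0-k\rho$, so tie-breaking must pick from the strictly cheapest unselected paths. After $M$ selections each path has accumulated one $+\sigma$ and $M-1$ increments of $-\rho$, returning to a tied configuration at per-node price $p_0+\bigl(\sigma-(M-1)\rho\bigr)$, and the cycle restarts from an analogous state.

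Iterating the cycle argument, after $kM$ rounds every disjoint path has per-node price $p_0+k\bigl(\sigma-(M-1)\rho\bigr)$ and cost $L\bigl(p_0+k(\sigma-(M-1)\rho)\bigr)$; the cheapest-path cost at intermediate rounds differs from this cycle-endpoint value by a bounded amount. Hence the cheapest-path cost diverges as $k\to\infty$ exactly when $\sigma-(M-1)\rho>0$, i.e.\ when $\sigma/\rho>M-1$. Conversely, if $\sigma/\rho\leq M-1$ the per-cycle net change is non-positive and the cheapest-path cost is bounded above by $Lp_0$ for all time.

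The main obstacle is the round-robin claim: verifying that tie-breaking genuinely visits all $M$ paths rather than revisiting a proper subset. The key observation is that selection raises a path's per-node price by exactly $\sigma+\rho$ above the freshly-lowered common price of the other disjoint paths in the current batch, so a just-selected path is strictly outside the tied cheapest set until every other disjoint path has been chosen. A secondary point worth checking is that non-disjoint or longer paths in $G$ never become strictly cheaper than the $M$ disjoint ones and hijack the selection; this follows from the paths' greater length combined with the bounded within-cycle fluctuation of per-node prices, with the non-negativity rule on posted prices (rule iii of the ABM) serving as a floor that precludes runaway negative prices on off-path nodes.
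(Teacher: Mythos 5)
Your proof is correct and follows essentially the same route as the paper's: a round-robin argument in which each of the $M$ disjoint paths is selected once per cycle of $M$ rounds, so that all paths return to a common per-node price shifted by $\sigma-(M-1)\rho$, whence the iff condition. You are somewhat more careful than the paper — explicitly verifying the tie-breaking induction, the boundedness of within-cycle fluctuations, and the non-interference of longer or non-disjoint paths and of the price floor — but the underlying argument is identical.
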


\begin{proof}
	
	Let us consider an enumeration of the disjoint paths from $S$ to $D$: $p_1,p_2,\ldots,p_M$.
	Let $x$ be the initial posted price for all the nodes.
	Initially, a path (without loss of generality, $p_1$) will be selected, and nodes belonging to path $p_1$ will increase their price from $x$ to $x+\sigma$, while nodes belonging to the rest of paths ($p_2,p_3,\ldots,p_M$) will decrease their price to $x - \rho$. In the subsequent steps, the rest of the paths ($p_2,p_3,\ldots,p_M$) will be selected until all the $M$ paths will have been selected. At step $M$,
	the selected path will have a cost per node of $x - (M-1) \rho$, thus, its nodes will increase their price to  $x+\sigma - (M-1) \rho$, which will be higher than $x$ only if $\sigma > (M-1)\rho$. Note that, by the same reasoning, at step $M+1$ all the nodes
	located in {$p_1,\ldots, p_M$} will have a cost of $x+\sigma - (M-1) \rho$. Therefore,
	the cost will increase indefinitely if and only if $\sigma > (M-1)\rho$. 
\end{proof}

\subsubsection{Average Path Length and Clustering Coefficient} \label{sec:si.cost.pl}

Fig. 6 from the main text shows how trade costs scale with the average path length of the network. 
This result is not a consequence of cheapest paths length differences: the mean price of nodes on the cheapest path also correlates with the average path length, as shown in Fig. \ref{fig:si.meancost.sim}.

It is well known that small world networks differ from random networks with respect to clustering and average path length in different ways. Therefore, the clustering coefficient is also a natural candidate for capturing the differences in the cheapest path cost, as it is indeed the case as shown in Fig. \ref{fig:si.fig6cc}. To check which of both observables is more connected with the network properties driving the differences in cost, we executed two linear regressions with final trade cost (i.e., on the cheapest path) as the dependent variable: one having the clustering coefficient as independent variables and one with the average path length as independent variables. The data considered are the results from simulations executed without the threshold, which are shown in Fig. 6 on the main text. As the slopes of both properties change with respect to $M$, we added an individual coefficient for each value of $M$ obtained by multiplying the network property by a Kronecker delta ($\delta$) dummy variable. The regression model with clustering coefficient ($T$) is shown in Eq. \ref{eq:regT}, and that with average path length ($L$) in Eq. \ref{eq:regL}. We restricted the data to $M \leq 3$, as for larger values final costs are mostly zero.

\begin{equation}
C_i = \beta_1T_i\delta_{M_i1} + \beta_2T_i\delta_{M_i2} + \beta_3T_i\delta_{M_i3} \label{eq:regT}
\end{equation}

\begin{equation}
C_i = \beta_1L_i\delta_{M_i1} + \beta_2L_i\delta_{M_i2} + \beta_3L_i\delta_{M_i3} \label{eq:regL}
\end{equation}

The results from the regression are displayed in Table \ref{table:coefficients} which shows that the model with average path length as regressors provides a better description of the final trade costs: the coefficient of determination 
when considering the average path length is $R^2(L)=0.79$, while the model with the clustering coefficient achieves $R^2(T)=0.57$.

\begin{table}
	\begin{center}
		\begin{tabular}{l c c }
			\hline
			& Model with Clustering Coefficient & Model with Average Path Length \\
			\toprule
			Clustering Coefficient for $M=1$ & $0.74^{***}$  &              \\
			& $(0.00)$      &              \\
			Clustering Coefficient for $M=2$ & $0.04^{***}$  &              \\
			& $(0.00)$      &              \\
			Clustering Coefficient for $M=3$ & $-0.09^{***}$ &              \\
			& $(0.00)$      &              \\
			Average Path Length for $M=1$  &               & $0.97^{***}$ \\
			&               & $(0.00)$     \\
			Average Path Length for $M=2$ &               & $0.20^{***}$ \\
			&               & $(0.00)$     \\
			Average Path Length for $M=3$  &               & $0.07^{***}$ \\
			&               & $(0.00)$     \\
			\hline
			R$^2$      & 0.57          & 0.79         \\
			Adj. R$^2$ & 0.57          & 0.79         \\
			Num. obs.  & 131148        & 131148       \\
			RMSE       & 0.66          & 0.46         \\
			\botrule
			\multicolumn{3}{l}{\scriptsize{$^{***}p<0.001$, $^{**}p<0.01$, $^*p<0.05$}}
		\end{tabular}
		\caption{\textbf{Coefficients of the statistical models.} Coefficients of the two statistical models, having the clustering coefficient as the independent variable (\textit{left}), and having the average path length as the independent variable (\textit{right}). The coefficients are standardized (centered and divided by their standard deviation). As the coefficients of clustering and average path length seem to vary with $M$, we consider it as a dummy variable. Slopes (clustering coefficient and average path length) are $M$-specific.}
		\label{table:coefficients}
	\end{center}
\end{table}

\begin{figure}[h]
	\centering
	\includegraphics[width=0.9\textwidth]{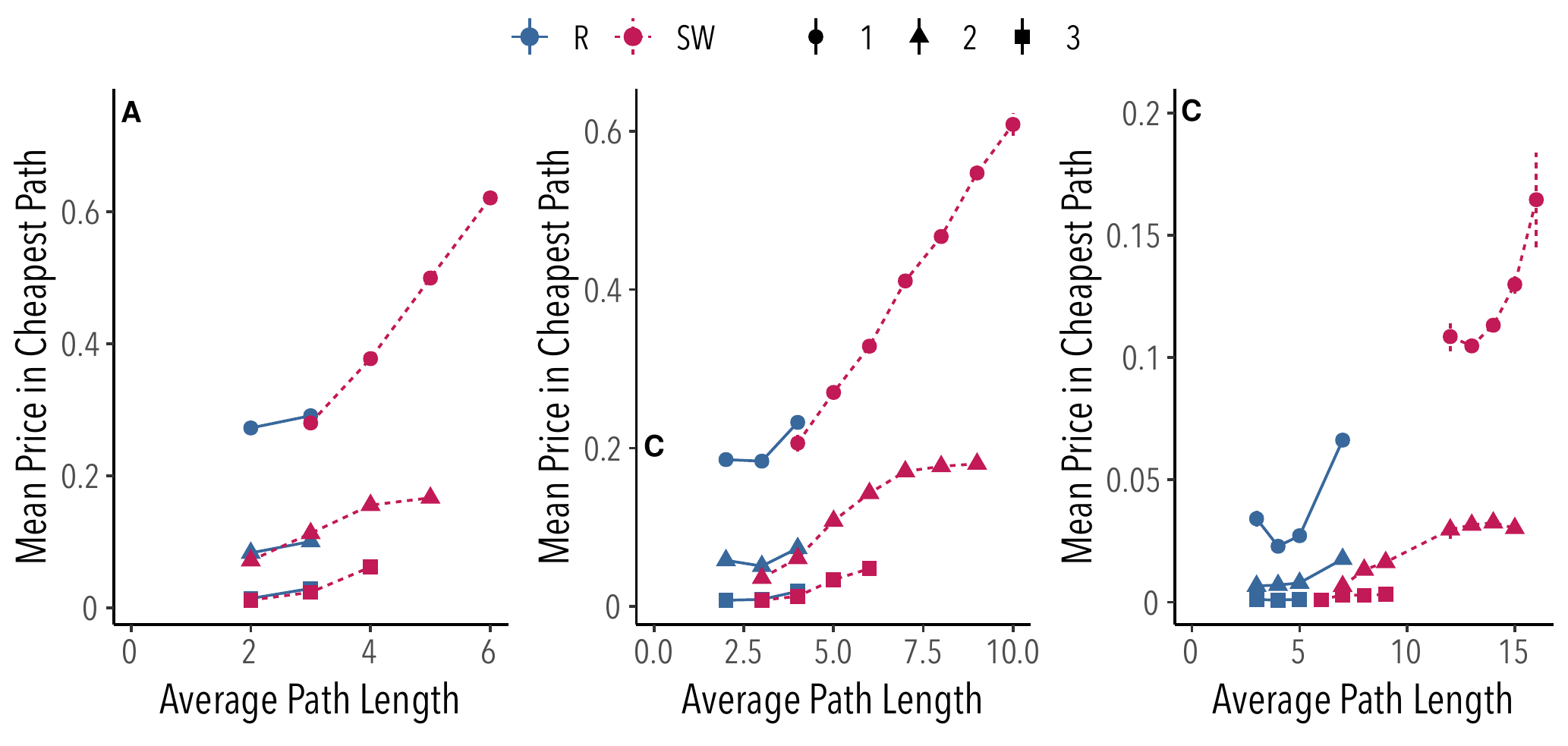}
	\caption{\textbf{Numerical results of the model.} Average price of nodes on the cheapest path after a period of $10^4$ rounds as a function of the mean path length of the network. Each panel corresponds to a different network size: 26 (\textbf{A}), 50 (\textbf{B}), and 1000 (\textbf{C}) nodes. Different colors correspond to different network models: random (\textit{red}) and small-world (\textit{blue}). Different simbols correspond to different values of the number $M$ of disjoint paths:  $M=1$ (circles), 2 (triangles), and 3 (squares).
		For each configuration, there were generated 10000 networks of each size according
		to the Watts-Strogatz algorithm \cite{watts1998collective}
		with $p=0.1$ (SW) and $p=1$ (R), and average degree from 2 to 10.
		The increment/decrement ratio was fixed to the experimental value ($\sigma/\rho=2.4$).}
	\label{fig:si.meancost.sim}
\end{figure}

\begin{figure}
	\centering
	\includegraphics[width=0.9\textwidth]{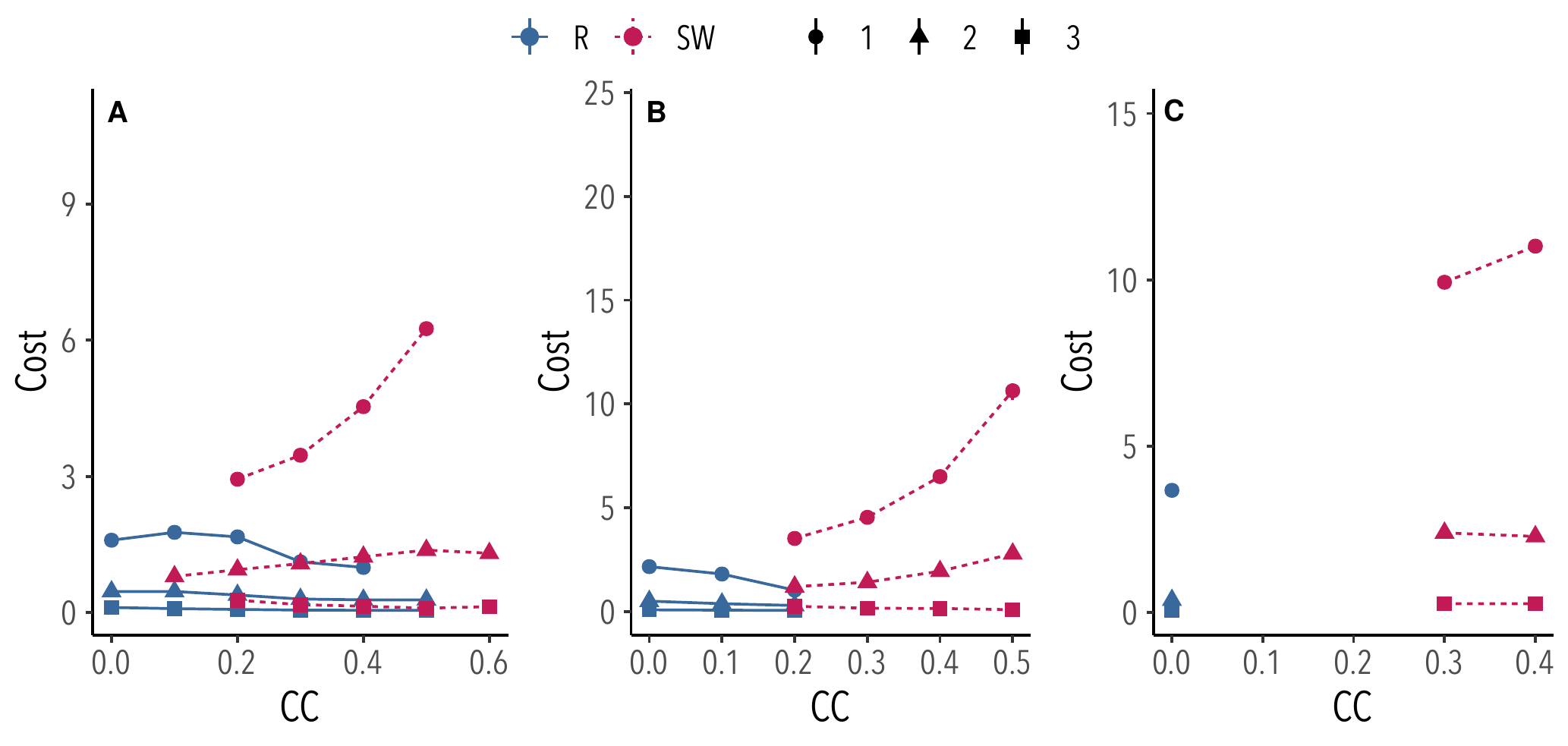}
	\caption{
		\textbf{Numerical results of the model.} Average cost of the cheapest path after a period of $10^4$ rounds as a function of the network clustering coefficient. Each panel corresponds to a different network size: 26 (\textbf{A}), 50 (\textbf{B}), and 1000 (\textbf{C}) nodes. Different colors correspond to different network models: random (\textit{red}) and small-world (\textit{blue}); while different symbols correspond to different values of the number $M$ of disjoint paths: $M=1$ (circles), 2 (triangles), and 3 (squares).
		For each configuration, there were generated 10000 networks of each size, according
		to the Watts-Strogatz algorithm \cite{watts1998collective}
		with $p=0.1$ (SW), $p=1$ (R), and average degree from 2 to 10.
		The increment/decrement ratio was fixed to the experimental value ($\sigma/\rho=2.4$).}
	\label{fig:si.fig6cc}
\end{figure}

\end{document}